\newtheorem{theorem}{Theorem}
\newtheorem{lemma}{Lemma}
\begin{document}

\sloppy
 
\title{ MAP Support Detection for Greedy Sparse Signal Recovery Algorithms in Compressive Sensing}
 
\author{
  \IEEEauthorblockN{Namyoon Lee,~ \IEEEmembership{Member,~IEEE}
\thanks{N. Lee is with Intel Labs, 2200 Mission College Blvd, Santa Clara, CA 95054, USA (e-mail:namyoon.lee@intel.com). This work was done when the author was with the University of Texas at Austin.}}
}

%
%

\maketitle

\begin{abstract}
A reliable support detection is essential for a greedy algorithm to reconstruct a sparse signal accurately from compressed and noisy measurements. This paper proposes a novel support detection method for greedy algorithms, which is referred to as ``\textit{maximum a posteriori (MAP) support detection}". Unlike existing support detection methods that identify support indices with the largest correlation value in magnitude per iteration, the proposed method selects them with the largest likelihood ratios computed under the true and null support hypotheses by simultaneously exploiting the distributions of sensing matrix, sparse signal, and noise. Leveraging this technique, MAP-Matching Pursuit (MAP-MP) is first presented to show the advantages of exploiting the proposed support detection method, and a sufficient condition for perfect signal recovery is derived for the case when the sparse signal is binary. Subsequently, a set of iterative greedy algorithms, called MAP-generalized Orthogonal Matching Pursuit (MAP-gOMP), MAP-Compressive Sampling Matching Pursuit (MAP-CoSaMP), and MAP-Subspace Pursuit (MAP-SP) are presented to demonstrate the applicability of the proposed support detection method  to existing greedy algorithms. From empirical results, it is shown that the proposed greedy algorithms with highly reliable support detection can be better, faster, and easier to implement than basis pursuit via linear programming.

\end{abstract}

\section{Introduction}

Compressive sensing (CS)  \cite{CandesTao2005,CandesRombergTao2006} is a technique to reconstruct sparse signals from compressed measurements. CS has received great attention due to its broad application areas including imaging, radar, and communication systems \cite{CandesWakin_Mag, Eldar_book}. The fundamental theory of CS guarantees to recover a high dimensional signal vector from linear measurements that are far fewer in number than the signal's dimension, provided that the sparsity of the signal, i.e. number of nonzero elements, is smaller than a certain fraction of the number of measurements. 

Denoting the sparse signal vector and the compressive sensing matrix as ${\bf x}\in\mathbb{R}^N$ and ${\bf \Phi}\in\mathbb{R}^{M\times N}$  respectively, with $M<N$, the optimal sparse recovery solution can be theoretically obtained by solving the $\ell_0$-minimization problem
\begin{align}
\min \|{\bf x}\|_0~~ {\rm subject}~{\rm to}~~ {\bf y}={\bf \Phi}{\bf x}. 
\end{align}
In practice, however, solving this problem is NP-hard \cite{Garey_book} and computationally unfeasible for large signal dimension ($N$).

Design of computationally efficient sparse signal recovery algorithms have extensively studied in past works. Basis Pursuit (BP) \cite{CandesRomberg2007,Candes2008,Chen_BP} is a representative sparse signal recovery algorithm leveraging convex optimization. Relaxing the $\ell_0$-minimization problem to a $\ell_1$-minimization problem, it has been shown that the sparse signal recovery problem can be solved with stability and uniform guarantees using linear programming, but with  polynomially bounded computation complexity. For example, an interior point method that solves the $\ell_1$-minimization problem has an associated  computational complexity of $\mathcal{O}(M^2N^3)$ \cite{Nesterov}.

As a result, greedy algorithms are also popular because their complexity is lower than that of BP although stability and  guarantees are challenging to prove \cite{TroppGilbert2007,Donoho,IHT2009,Needell2010_CoSaMP,Dai2009_SP,gOMP}. The underlying idea of greedy algorithms is to estimate the nonzero elements of a sparse vector iteratively. Orthogonal matching pursuit (OMP) is a well-known greedy algorithm\cite{TroppGilbert2007,DavenportWakin2010,CaiWang2011,Zhang,LiuTemlyakov2012}, which estimates the coordinate of the non-zero element in signal ${\bf x}$ that has the maximum absolute correlation between the column vector in the sensing matrix and the residual vector in each iteration. By subtracting the contribution from the measurement vector ${\bf y}$, the algorithm updates the entire support of ${\bf x}$ in an iterative manner. Although this algorithm is simple to implement, it is vulnerable to error propagation effect \cite{TroppGilbert2007,DavenportWakin2010,CaiWang2011,Zhang,LiuTemlyakov2012,Donoho,IHT2009}. This is because the OMP algorithm is not capable of removing incorrectly estimated supports once those are added to the support set during the iterations, which leads to significant performance degradation in the signal recovery.

Several other advanced greedy algorithms have been proposed to overcome the error propagation effect, which include Stagewise Orthogonal Matching Pursuit (StOMP) \cite{Donoho}, iterative hard thresholding (IHT) \cite{IHT2009}, generalized OMP (gOMP) \cite{gOMP},  Compressive Sampling Matching Pursuit (CoSaMP) \cite{Needell2010_CoSaMP}, and Subspace Pursuit (SP) \cite{Dai2009_SP}. The underlying principle of these advanced greedy algorithms is the selection of multiple support indices per iteration, leading to a decrease in the probability of estimating incorrect support elements. For example, in each iteration, StOMP \cite{Donoho} identifies  multiple support indices such that the correlation value in magnitude between the current residual vector and the corresponding column vector of ${\bf \Phi}$ exceeds a predefined threshold. Similarly, gOMP \cite{gOMP} chooses multiple supports that provide $L$ largest correlation in magnitude per iteration, where $L$ is a fixed parameter given in the algorithm. CoSaMP \cite{Needell2010_CoSaMP} and SP \cite{Dai2009_SP} also identify multiple support indices per iteration, but differ from StOMP and gOMP in that they perform a two-stage sparse signal estimation approach that allows to add or remove new support candidates adaptively. A common shortcoming of these greedy algorithms \cite{TroppGilbert2007,Donoho,DavenportWakin2010,Needell2010_CoSaMP,Dai2009_SP} is that they rely on the order statistics of the correlation value in magnitude for the support estimation.   

Depending on statistical distributions of sensing matrix, sparse signal, and noise, however, the selection of the index with the largest  correlation value may not be optimal in the sense of support detection probability. With this motivation, greedy algorithms called Bayesian matching pursuit were proposed in\cite{Ji2008,Schniter2009,Zayyani2009,Herzet2010,Herzet2011}. The key idea of Bayesian matching pursuit is the use of distributions of the sparse signal and noise in the support detection step. For example, fast Bayesian matching pursuit (FBMP) \cite{Schniter2009} performs sparse signal estimation via model selection, assuming a Gaussian distribution for the sparse vector. Similarly, in \cite{Herzet2010,Herzet2011} assuming the elements of a sparse signal are Bernoulli-Gaussian mixed variables, and a given deterministic sensing matrix, the algorithms jointly update a support index and the corresponding signal element at each iteration in order to maximize the increase of a local likelihood function. Although these approaches show a better sparse recovery performance compared to  conventional matching pursuit algorithms in the presence of noise, they are applicable to certain distributions of ${\bf x}$ like Bernoulli-Gaussian, and there are no provable performance guarantees.

In this paper, we continue the same spirit of harnessing the statistical distributions of sparse signal, sensing matrix, and noise for the support detection in greedy algorithms. Our main contribution is to propose a novel support detection method for greedy algorithms, which is referred to as maximum a posteriori (MAP) support detection. The key difference with prior work in \cite{Ji2008,Schniter2009,Zayyani2009,Herzet2010,Herzet2011} is that the proposed method estimates supports with the largest log-MAP ratio values computed under the true and null support hypotheses in each iteration by incorporating the distributions of the sensing matrix, the sparse signal, and noise jointly. Specifically, assuming the sensing matrix has elements that are drawn from independent and identically distributed (IID) Gaussian random variables, and the sparse signal has non-zero elements that follow an arbitrary distribution, the proposed method selects the support element having the maximum log-MAP ratio instead of selecting indices that exceed a certain threshold as in \cite{Schniter2009,Zayyani2009,Herzet2010,Herzet2011}. By leveraging this technique, we first present a novel greedy algorithm named ``\textit{MAP-Matching Pursuit  (MAP-MP)''} for the binary sparse signal reconstruction. Using this, it is shown that  MAP-MP exactly recovers a $K$-sparse binary signal within $K$ number of iterations \textit{almost surely}, provided that the number of measurement scales as 
\begin{align}
M=\mathcal{O}( (K+{\tilde \sigma}_w^2)\log(N)),
\end{align} 
where ${\tilde \sigma}_w^2$ is the normalized noise variance. This condition extends the existing statistical guarantees   proven in \cite{TroppGilbert2007} by incorporating a noise effect. Next, we extend our MAP-MP algorithm for the sparse signal with an arbitrary distribution using a moment matching technique. Subsequently, applying the proposed MAP support detection method, we propose a set of iterative greedy algorithms, called MAP-Orthogonal Matching Pursuit (MAP-OMP), MAP-generalized OMP (MAP-gOMP), MAP-Compressive Sampling Matching Pursuit (MAP-CoSaMP), and MAP-Subspace Pursuit (MAP-SP) to demonstrate the applicability of the proposed support detection method in improving the recovery performance of the existing algorithms. From the empirical results, it is shown that the proposed algorithms provide significant gains in the perfect recovery performance compared to that of the existing greedy algorithms as well as a  $\ell_1$-minimization algorithm via BP.

\vspace{-0.3cm} 
\section{Problem Statement}
 
We consider a sparse signal detection problem from compressed and noisy measurement. Let us denote a $N$ dimensional input signal vector by ${\bf x}\in \mathbb{R}^N$. We assume that the input vector is $K$-sparse, i.e., $\|{\bf x}\|_0=K \ll N$ and the sparsity level $K$ is known {\it a priori}. This prior information can be estimated accurately in some applications using the cross validation technique in\cite{Ward}. We denote the true support set by $\mathcal{T}\subset \{1,\ldots,N\}$ and $|\mathcal{T}|=K$. The non-zero entries of ${\bf x}$ are distributed according to a continuous distribution, i.e., $p({\bf x})=\prod_{k\in\mathcal{T}}p_{k}(x_k)$. Furthermore, we denote the sensing matrix consisting of $N$ column vectors by ${\bf \Phi}\in \mathbb{R}^{M \times N}$,
\begin{align}
{\bf \Phi} =\left[{\bf a}_1, {\bf a}_2, \ldots, {\bf a}_N\right],
\end{align}
where ${\bf a}_n$ denotes the $n$th dictionary vector whose entries are drawn from an IID Gaussian random distribution with zero mean and variance $\frac{1}{M}$, i.e., $\mathcal{N}\left(0,\frac{1}{M}\right)$. Then, the measurement equation is given by
\begin{align}
{\bf y}={\bf \Phi}{\bf x} +{\bf w}, \label{eq:system_eq}
\end{align}
where ${\bf y}\in \mathbb{R}^{M}$ and ${\bf w}\in \mathbb{R}^{M}$ are the measurement and noise vector, respectively. All entries of the noise vector are assumed to be IID Gaussian random variables with zero mean and variance $\sigma_w^2$, $\mathcal{N}\left(0,\sigma_w^2\right)$.

Throughout this paper, the difference between two sets  $\mathcal{T}$ and  $\mathcal{S}$ is denoted by $\mathcal{T}\setminus\mathcal{S}$. We use the subscript notations  ${\bf x}_{\mid \mathcal{S}}$ and ${\bf \Phi}_{\mid \mathcal{S}}$ to denote that vector ${\bf x}$ and matrix ${\bf \Phi}$ are being restricted to only elements or columns in set $\mathcal{S}$.  


%
%


\section{MAP-Matching Pursuit}
In this section, we first present MAP-MP, a binary sparse signal ${\bf x}\in\{0,1\}^{N}$ recovery algorithm. Then, we derive a bound that provides a  sufficient condition for perfect signal recovery to demonstrate provable performance guarantees of the proposed algorithm.

\vspace{-0.3cm}
\subsection{Algorithm}
Similar to the other greedy algorithm \cite{TroppGilbert2007}, MAP-MP is a greedy algorithm that sequentially finds support indices and estimates the signal representation within a certain number of iterations. The core difference between the proposed MAP-MP algorithm and the prior OMP-type algorithms lies in the selection rule of the support index per iteration. In contrast to the OMP-type greedy algorithms, MAP-MP chooses the index based on a maximum likelihood hypothesis test by leveraging statistical property of the sensing matrix and the sparse signal. 

We begin by providing Lemmas that are required for explaining the MAP-MP algorithm. Lemma \ref{lemma1} provides the distribution of the inner product between two (atom) dictionary vectors generated by IID Gaussian random variable. Lemma \ref{lemma2} yields the distribution of the 2-norm of  each dictionary vector. Lemma \ref{lemma3}, in turn, provides an asymptotic behavior of the 2-norm of each dictionary vector when the measurement size $M$ goes to infinity.

\begin{lemma}\label{lemma1} Suppose that all the elements of ${\bf a}_n$ for $n\in[1:N]$ are drawn from IID Gaussian distribution with zero mean and variance $\frac{1}{M}$. Then, the distribution of $\frac{{\bf a}_n^T{\bf a}_{\ell}}{\|{\bf a}_n\|_2}$ is Gaussian with zero mean and variance $\frac{1}{M}$, i.e., $\frac{{\bf a}_n^T{\bf a}_{\ell}}{\|{\bf a}_n\|_2}\sim\mathcal{N}\left(0,\frac{1}{M}\right)$.
\end{lemma}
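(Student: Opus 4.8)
The plan is to establish the claim by conditioning on ${\bf a}_n$ and using two elementary facts: distinct columns of ${\bf \Phi}$ are independent, and a fixed linear combination of independent Gaussians is again Gaussian. First I would fix indices $n\neq\ell$ and condition on the realization of ${\bf a}_n$. Because the entries of ${\bf \Phi}$ are IID, ${\bf a}_\ell$ is independent of ${\bf a}_n$, with coordinates $a_{\ell,1},\dots,a_{\ell,M}$ that are IID $\mathcal{N}(0,1/M)$. Writing ${\bf a}_n^T{\bf a}_\ell=\sum_{i=1}^M a_{n,i}\,a_{\ell,i}$, this expression is---conditionally on ${\bf a}_n$---a deterministic linear combination of independent zero-mean Gaussian variables, hence itself zero-mean Gaussian with conditional variance $\sum_{i=1}^M a_{n,i}^2\cdot\frac{1}{M}=\frac{\|{\bf a}_n\|_2^2}{M}$.

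Dividing by the (now constant) quantity $\|{\bf a}_n\|_2$ then shows that, conditioned on ${\bf a}_n$, the ratio $\frac{{\bf a}_n^T{\bf a}_\ell}{\|{\bf a}_n\|_2}$ is $\mathcal{N}(0,1/M)$. The crucial point is that this conditional law is the same for every realization of ${\bf a}_n$; consequently the unconditional distribution coincides with it, and in fact $\frac{{\bf a}_n^T{\bf a}_\ell}{\|{\bf a}_n\|_2}$ is statistically independent of ${\bf a}_n$. For intuition one can phrase the same computation geometrically: ${\bf u}:={\bf a}_n/\|{\bf a}_n\|_2$ is a unit vector independent of ${\bf a}_\ell$, and the projection ${\bf u}^T{\bf a}_\ell$ of the isotropic Gaussian vector ${\bf a}_\ell$ onto any fixed unit direction has variance equal to the per-coordinate variance $1/M$ by rotational invariance of the Gaussian law.

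There is no genuine obstacle in this argument; the only place requiring slight care is the logical passage from the conditional statement to the unconditional one---namely, observing that since the conditional distribution does not depend on the conditioning variable, it is also the marginal distribution (and independence follows as a bonus). I would close by noting that the very same reasoning yields the slightly stronger fact used implicitly in the sequel: conditioned on ${\bf a}_n$, the normalized inner products $\big\{\frac{{\bf a}_n^T{\bf a}_\ell}{\|{\bf a}_n\|_2}\big\}_{\ell\neq n}$ are mutually independent $\mathcal{N}(0,1/M)$ random variables.
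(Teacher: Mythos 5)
Your proof is correct and is essentially the argument the paper gives: both rest on conditioning on ${\bf a}_n$ and exploiting the isotropy of the independent Gaussian vector ${\bf a}_\ell$, so that its projection onto the unit direction ${\bf a}_n/\|{\bf a}_n\|_2$ is $\mathcal{N}(0,1/M)$ regardless of the realization of ${\bf a}_n$. The only cosmetic difference is that you compute the conditional variance of the linear combination directly, whereas the paper implements the same isotropy via an explicit rotation ${\bf U}$ aligning ${\bf a}_n/\|{\bf a}_n\|_2$ with the first coordinate axis --- a variant you yourself note in your geometric remark.
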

\begin{proof}
See Appendix \ref{proof:lemma1}.
\end{proof}

\begin{lemma}\label{lemma2} The distribution of the norm $\|{\bf a}_n\|_2$ is
\begin{align}
f_{\|{\bf a}_n\|_2}(x)=\frac{2^{1-\frac{M}{2}}M^{\frac{M}{2}}x^{M-1}e^{-\frac{Mx^2}{2}}}{\Gamma\left(\frac{M}{2}\right)}
\end{align}
and $\mathbb{E}[\|{\bf a}_n\|_2]=\sqrt{\frac{2}{M}}\frac{\Gamma\left(\frac{1+M}{2}\right)}{\Gamma\left(\frac{M}{2}\right)}$.
\end{lemma}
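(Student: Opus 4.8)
The plan is to reduce $\|{\bf a}_n\|_2$ to a scaled chi random variable and then transform the known chi density. First I would observe that since each entry $a_{n,i}$ is $\mathcal{N}(0,\tfrac{1}{M})$, the rescaled entries $\sqrt{M}\,a_{n,i}$ are IID standard Gaussian, so $M\|{\bf a}_n\|_2^2=\sum_{i=1}^{M}\big(\sqrt{M}\,a_{n,i}\big)^2$ is chi-squared with $M$ degrees of freedom; equivalently $Z:=\sqrt{M}\,\|{\bf a}_n\|_2$ is chi-distributed with $M$ degrees of freedom, with density $g(z)=\frac{1}{2^{M/2-1}\Gamma(M/2)}\,z^{M-1}e^{-z^2/2}$ for $z\ge 0$.

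Next I would apply the change of variables $x=z/\sqrt{M}$, whose Jacobian is $\sqrt{M}$, to obtain $f_{\|{\bf a}_n\|_2}(x)=\sqrt{M}\,g\big(\sqrt{M}\,x\big)=\frac{M^{M/2}}{2^{M/2-1}\Gamma(M/2)}\,x^{M-1}e^{-Mx^2/2}$, and rewriting $2^{-(M/2-1)}=2^{1-M/2}$ yields precisely the claimed expression. As a sanity check I would verify $\int_0^\infty f_{\|{\bf a}_n\|_2}(x)\,dx=1$ using the substitution $u=Mx^2/2$, which turns the integral into $\frac{1}{\Gamma(M/2)}\int_0^\infty u^{M/2-1}e^{-u}\,du=1$.

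For the expectation I would compute $\mathbb{E}[\|{\bf a}_n\|_2]=\int_0^\infty x\,f_{\|{\bf a}_n\|_2}(x)\,dx$ directly, again with $u=Mx^2/2$ (so $x=\sqrt{2u/M}$ and $dx=(2Mu)^{-1/2}\,du$); the integrand collapses to $\int_0^\infty u^{(M+1)/2-1}e^{-u}\,du=\Gamma\big(\tfrac{M+1}{2}\big)$, and collecting the constants gives $\sqrt{\tfrac{2}{M}}\,\frac{\Gamma((M+1)/2)}{\Gamma(M/2)}$. Equivalently, this is immediate from the standard mean $\sqrt{2}\,\Gamma\big(\tfrac{M+1}{2}\big)/\Gamma\big(\tfrac{M}{2}\big)$ of a chi random variable, scaled by $1/\sqrt{M}$.

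There is no substantive obstacle in this argument; the only thing demanding care is the bookkeeping of the multiplicative constants — the factor $2^{1-M/2}$ and the Jacobian $\sqrt{M}$ — and applying the substitution $u=Mx^2/2$ consistently in both the normalization check and the mean computation.
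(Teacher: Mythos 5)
Your proposal is correct and follows essentially the same route as the paper: both arguments rest on recognizing that $\sqrt{M}\,\|{\bf a}_n\|_2$ is chi-distributed with $M$ degrees of freedom and then transferring its distribution to $\|{\bf a}_n\|_2$ (the paper via the CDF in terms of the lower incomplete gamma function followed by differentiation, you via the chi density and a Jacobian, which is the same computation in different clothing). The mean is computed by the same direct gamma-integral in both cases.
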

\begin{proof}
See Appendix \ref{proof:lemma2}.
\end{proof}

\begin{lemma}\label{lemma3} The norm $\|{\bf a}_n\|_2$ of each dictionary vector for $n\in[1:N]$ concentrates to one asymptotically as $M$ goes to infinity, 
\begin{align}
\lim_{M\rightarrow \infty}\mathbb{P}\left[ \left| \|{\bf a}_n\|_2 - 1 \right|  \geq \epsilon \right] =0
\end{align}
for some positive $\epsilon>0$. 
\end{lemma}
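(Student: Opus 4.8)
The plan is to reduce the statement to a weak law of large numbers (equivalently, a Chebyshev-type bound) on the \emph{squared} norm $\|{\bf a}_n\|_2^2$, and then transfer the concentration from the square back to the norm itself.

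First I would write $\|{\bf a}_n\|_2^2 = \sum_{i=1}^M a_{n,i}^2$, where the $a_{n,i}$ are IID $\mathcal{N}(0,\tfrac{1}{M})$. Using $\mathbb{E}[a_{n,i}^2] = \tfrac{1}{M}$ and $\mathbb{E}[a_{n,i}^4] = \tfrac{3}{M^2}$ (the fourth moment of a zero-mean Gaussian is three times its variance squared), one gets $\mathbb{E}[\|{\bf a}_n\|_2^2] = 1$ and $\mathrm{Var}(\|{\bf a}_n\|_2^2) = M\,\mathrm{Var}(a_{n,i}^2) = M\big(\tfrac{3}{M^2} - \tfrac{1}{M^2}\big) = \tfrac{2}{M}$. (Alternatively, these moments can be read directly off the density given in Lemma~\ref{lemma2}, since $M\|{\bf a}_n\|_2^2$ is $\chi^2_M$.) Chebyshev's inequality then gives, for any $\delta > 0$, $\mathbb{P}\big[\,\big|\|{\bf a}_n\|_2^2 - 1\big| \ge \delta\,\big] \le \tfrac{2}{M\delta^2}$, which vanishes as $M \to \infty$.

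Next I would transfer this bound to the norm. Because $\|{\bf a}_n\|_2 \ge 0$, whenever $\big|\|{\bf a}_n\|_2 - 1\big| \ge \epsilon$ we also have $\big|\|{\bf a}_n\|_2^2 - 1\big| = \big|\|{\bf a}_n\|_2 - 1\big|\cdot\big(\|{\bf a}_n\|_2 + 1\big) \ge \epsilon$. Hence the event $\{\,|\|{\bf a}_n\|_2 - 1| \ge \epsilon\,\}$ is contained in $\{\,|\|{\bf a}_n\|_2^2 - 1| \ge \epsilon\,\}$, so $\mathbb{P}\big[\,\big|\|{\bf a}_n\|_2 - 1\big| \ge \epsilon\,\big] \le \mathbb{P}\big[\,\big|\|{\bf a}_n\|_2^2 - 1\big| \ge \epsilon\,\big] \le \tfrac{2}{M\epsilon^2} \to 0$, which is the asserted limit; the argument is identical for every index $n$ by the exchangeability of the columns.

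I do not expect a genuine obstacle here: the only step requiring a moment's care is the last inequality relating $\|{\bf a}_n\|_2$ to $\|{\bf a}_n\|_2^2$, and even that is immediate from nonnegativity of the norm. If a sharper, quantitative rate were wanted, one could replace Chebyshev with a standard chi-square (sub-exponential) tail bound to obtain exponential-in-$M$ concentration, but the cruder second-moment estimate already yields the stated convergence in probability.
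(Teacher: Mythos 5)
Your proof is correct, and it takes a mildly but genuinely different route from the paper's. The paper applies Chebyshev's inequality to $\|{\bf a}_n\|_2$ itself: it bounds $\mathbb{P}\bigl[\,|\|{\bf a}_n\|_2-\mathbb{E}\|{\bf a}_n\|_2|\geq\epsilon\,\bigr]$ by the variance $1-\tfrac{2}{M}\bigl(\Gamma(\tfrac{1+M}{2})/\Gamma(\tfrac{M}{2})\bigr)^2$ (up to the $\epsilon^2$ normalization) and then must invoke the Gamma-ratio asymptotic $\tfrac{2}{M}\Gamma(\tfrac{1+M}{2})^2/\Gamma(\tfrac{M}{2})^2\to 1$ twice — once to show the variance vanishes and once to move the center of concentration from $\mathbb{E}\|{\bf a}_n\|_2$ to $1$. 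You instead apply Chebyshev to the squared norm, whose mean is exactly $1$ and whose variance is exactly $2/M$ by an elementary fourth-moment computation, and then transfer to the norm via the event inclusion $\{|\|{\bf a}_n\|_2-1|\geq\epsilon\}\subseteq\{|\|{\bf a}_n\|_2^2-1|\geq\epsilon\}$, which is correctly justified by $\|{\bf a}_n\|_2+1\geq 1$. What your version buys is the elimination of all Gamma-function asymptotics and an explicit quantitative rate $2/(M\epsilon^2)$ centered at $1$ rather than at the mean; what the paper's version buys is consistency with Lemma 2, whose density and mean formula it reuses directly. Both are second-moment arguments and both are valid.
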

\begin{proof}
See Appendix \ref{proof:lemma3}.
\end{proof}

By leveraging these Lemmas, we explain the proposed algorithm. In the $k$th iteration, the algorithm produces $N$ correlation values $\left\{z^k_1,z_2^k,\ldots, z_N^k\right\}$ by computing the inner product between the residual vector ${\bf r}^{k-1}$ updated in the $(k\!-\!1)$th iteration and the $n$th column vector ${\bf a}_n$, i.e., $z^k_n= \frac{{\bf a}_n^T{\bf r}^{k-1}}{\|{\bf a}_n\|_2}$ for $n\in[1:N]$. Under the premise that the algorithm has perfectly found the elements of the true support, i.e., $\hat{x}_{\ell}=1$ for $\ell\in\mathcal{S}^{k-1}$, the residual vector is 
\begin{align}
{\bf r}^{k-1}=\sum_{\ell \in \mathcal{T}\setminus\mathcal{S}^{k-1} }{\bf a}_{\ell}x_{\ell}+{\bf w},
\end{align}
where ${\mathcal S}^{k-1}\subset \mathcal{T}$ and $|{\mathcal S}^{k-1}|=k-1$. Then, the inner product value $z^k_n= \frac{{\bf a}_n^T{\bf r}^{k-1}}{\|{\bf a}_n\|_2}$ can be expressed as a linear combination of the remaining non-zero elements and their corresponding supports as follows:
\begin{align}
z_n^k
&=\frac{{\bf a}_n^T}{\|{\bf a}_n\|_2}\left(\sum_{\ell \in \mathcal{T}\setminus\mathcal{S}^{k-1}}\!\!\!\!{\bf a}_{\ell}{ x}_{\ell}+{\bf w}\right) \nonumber\\
&=\|{\bf a}_{n}\|_2{ x}_{n}+\!\!\!\!\!\sum_{\ell \in \mathcal{T}\setminus\{\mathcal{S}^{k-1}\cup\! \{n\}\}}\frac{{\bf a}_n^T{\bf a}_{\ell}{ x}_{\ell}}{\|{\bf a}_n\|_2} +\frac{{\bf a}_n^T{\bf w}}{\|{\bf a}_n\|_2}.\label{eq:corr}
\end{align}
Using (\ref{eq:corr}), the proposed MAP-MP algorithm performs the hypothesis test with two hypotheses corresponding to $x_n=0$ and $x_n= 1$, respectively, as follows:
\begin{align}
&\mathcal{H}_0: z_n^k=\sum_{\ell \in \mathcal{T}\setminus\{\mathcal{S}^{k-1}\}}\!\!\frac{{\bf a}_n^T{\bf a}_{\ell}}{\|{\bf a}_n\|_2}x_{\ell}   +\frac{{\bf a}_n^T{\bf w}}{\|{\bf a}_n\|_2}\label{eq:HT_binary_0}\\
&\mathcal{H}_{1}:z_n^k= \|{\bf a}_{n}\|_2x_n+\!\!\!\!\!\!\!\!\!\sum_{\ell \in \mathcal{T}\setminus\{\mathcal{S}^{k-1}\cup \{n\}\}} \frac{{\bf a}_n^T{\bf a}_{\ell}}{\|{\bf a}_n\|_2}x_{\ell}+\frac{{\bf a}_n^T{\bf w}}{\|{\bf a}_n\|_2}, \label{eq:HT_binary_1}
\end{align}
where ￼$\mathcal{H}_0$ is the null hypothesis such that the $n$th column vector ${\bf a}_n$ is not the support, i.e., $x_n=0$ ($n\notin \mathcal{T}$)￼, and $\mathcal{H}_{1}$ is the alternate hypothesis indicating that the $n$th column vector is a non-zero support and the corresponding signal value is $1$, i.e., $x_n=1$ ($n\in \mathcal{T}$)￼. These two  hypotheses in \eqref{eq:HT_binary_0} and \eqref{eq:HT_binary_1} involve multiple levels of randomness, namely, 
\begin{enumerate}
\item The randomness associated with the inner product between two distinct vectors $\frac{{\bf a}_{n}}{\|{\bf a}_n\|_2}$ (unit norm) and ${\bf a}_{\ell}$; this is distributed as a Gaussian random variable, i.e., $\frac{{\bf a}_n^T{\bf a}_{\ell}}{\|{\bf a}_n\|_2}\sim\mathcal{N}\left(0,\frac{1}{M}\right)$ for $\ell\neq n$ as shown in Lemma \ref{lemma1} (See Appendix). 
\item The randomness associated with the effective noise $\frac{{\bf a}_n^T{\bf w}}{\|{\bf a}_n\|_2}$; this is Gaussian with zero mean and variance $\sigma_w^2$ i.e., $\frac{{\bf a}_n^T{\bf w}}{\|{\bf a}_n\|_2}\sim\mathcal{N}\left(0,\sigma_w^2\right)$, as ${\bf w}$ is isotropically distributed in ${\mathbb R}^M$.
\item  The randomness associated with the sum of independent Gaussian random variables, $z_n^k=\sum_{\ell \in \mathcal{T}\setminus\{\mathcal{S}^{k-1}\cup \{n\}\}}\frac{{\bf a}_n^T{\bf a}_{\ell}}{\|{\bf a}_n\|_2}+\frac{{\bf a}_n^T{\bf w}}{\|{\bf a}_n\|_2}$; this is also Gaussian with zero mean and variance $\mathbb{E}\left[\left(z_n^k\right)^2\right]=\frac{K-(k-1)}{M}+\sigma_w^2$ as  $\frac{{\bf a}_n^T{\bf a}_{\ell}}{\|{\bf a}_n\|_2}$, $\frac{{\bf a}_n^T{\bf a}_{j}}{\|{\bf a}_n\|_2}$, and $\frac{{\bf a}_n^T{\bf w}}{\|{\bf a}_n\|_2}$ are mutually independent Gaussian random variables for $\ell \neq j$.
\item The randomness associated with the norm of the column vector $\|{\bf a}_n\|_2$; this is a scaled Chi-distribution with $M$ degrees of freedom, i.e., $f_{\|{\bf a}_n\|_2}(x)=\frac{2^{1-\frac{M}{2}}M^{\frac{M}{2}}x^{M-1}e^{-\frac{Mx^2}{2}}}{\Gamma\left(\frac{M}{2}\right)}$ as shown in Lemma \ref{lemma2}.
\end{enumerate}
 
Using these facts, the conditional distribution of $z_n^k$ under the null hypothesis is given by
\begin{align}
\mathbb{P}\left(z_{n}^k|x_{n}=0\right)&=\frac{1}{\sigma_{0}\sqrt{2\pi}}\exp\left(-\frac{|z_{n}^k|^2}{2\sigma_{0}^2}\right), \label{eq:PDF_null}
\end{align}
where $\sigma_{0}=\sqrt{\frac{K-(k-1)}{M}+\sigma_w^2}$. Similarly, under the hypothesis of $x_n=1$ and $\|{\bf a}_n\|_2=u$, the conditional distribution of $z^k_n$ is Gaussian with mean $u$ and variance $\frac{K-(k-1)+1}{M}+\sigma_w^2$, i.e.,
\begin{align}
\mathbb{P}\left(z_{n}^k|x_{n}\!=\!1, \|{\bf a}_n\|_2\!=\!u\right)&=\frac{\exp\left(-\frac{|z_{n}^k-u|^2}{2\sigma_{1}^2}\right)}{\sigma_{1}\sqrt{2\pi}},\label{eq:PDF_notnull}
\end{align}
where $\sigma_{1}=\sqrt{\frac{K-(k-1)+1}{M}+\sigma_w^2}$. From Lemma \ref{lemma2}, by marginalizing the conditional distribution in (\ref{eq:PDF_notnull}) with respect to $u$, we obtain the conditional distribution under the hypothesis of $x_n=1$ as
\begin{align}
\mathbb{P}\left(z_{n}^k|x_{n}=1\right)&=\mathbb{E}_{\|{\bf a}_n\|_2}\left[ \mathbb{P}\left(z_{n}^k|x_{n}=1, \|{\bf a}_n\|\right) \right] \nonumber \\
&=\int_{0}^{\infty}\frac{e^{-\frac{|z_{n}^k-u|^2}{2\sigma_{1}^2}}}{\sigma_{1}\sqrt{2\pi}}\frac{2^{1-\frac{M}{2}}M^{\frac{M}{2}}u^{M-1}e^{-\frac{Mu^2}{2}}}{\Gamma\left(\frac{M}{2}\right)} {\rm d} u.\label{eq:PDF_notnull_mar}
\end{align}
This conditional distribution is intractable to analyze due to the integral expression. Applying Jensen's inequality, we obtain a lower bound of the conditional distribution function in a closed-form as follows:
\begin{align}
\mathbb{P}\left(z_{n}^k|x_{n}\!=\!1\right)&\geq \frac{\exp\left(- \frac{\mathbb{E}\left[(z_{n}^k-\|{\bf a}_n\|_2)^2\right]}{2\sigma_{1}^2}\right) }{\sigma_{1}\sqrt{2\pi}}\nonumber \\
&\geq\frac{\exp\left(- \frac{(z_{n}^k-\mathbb{E}\left[\|{\bf a}_n\|_2\right])^2}{2\sigma_{1}^2}\right)}{\sigma_{1}\sqrt{2\pi}} \nonumber \\
&=\frac{\exp\left(-\frac{\left(z_{n}^k-\sqrt{\frac{2}{M}}\frac{\Gamma\left(\frac{1+M}{2}\right)}{\Gamma\left(\frac{M}{2}\right)}\right)^2}{2\sigma_{1}^2}\right)}{\sigma_{1}\sqrt{2\pi}}. \label{eq:PDF_notnull_mar_lowerbound}
\end{align}
where the first and the second inequalities follow from the facts that $e^{-x}$ and $(a-x)^2$ are convex functions with respect to $x$ for any $a$, respectively. The last equality is because $\mathbb{E}[\|{\bf a}_n\|_2]=\sqrt{\frac{2}{M}}\frac{\Gamma\left(\frac{1+M}{2}\right)}{\Gamma\left(\frac{M}{2}\right)}$ as shown in Lemma \ref{lemma2}. From Lemma \ref{lemma3}, it is shown that this lower bound becomes tight, as the distribution of $\|{\bf a}_n\|_2$ converges to its mean value $\lim_{M\rightarrow \infty}\sqrt{\frac{2}{M}}\frac{\Gamma\left(\frac{1+M}{2}\right)}{\Gamma\left(\frac{M}{2}\right)}=1$ almost surely. As a result, for large enough $M$, the conditional distribution under the hypothesis of $x_n=1$ is simply approximated as
\begin{align}
\mathbb{P}\left(z_{n}^k|x_{n}=1\right)&\simeq \frac{1}{\sigma_{1}\sqrt{2\pi}}\exp\left(-\frac{|z_{n}^k-1|^2}{2\sigma_{1}^2}\right). \label{eq:PDF_notnull_mar_sim}
\end{align}
Leveraging the conditional probability density functions in (\ref{eq:PDF_null}) and (\ref{eq:PDF_notnull_mar_sim}), the MAP ratio for a given observation $z_n^k$ is 
\begin{align}
\Lambda \left(z_{n}^k\right)&=\ln\left(\frac{ \mathbb{P}\left(n\in\mathcal{T}\mid  z_{n}^k\right)}{\mathbb{P}\left(n\notin\mathcal{T}\mid  z_{n}^k\right)}\right)\nonumber \\
&\stackrel{(a)}{=}\ln\left(\frac{ \mathbb{P}\left(z_{n}^k|n\in\mathcal{T}\right)\mathbb{P}\left(n\in\mathcal{T}\right)}{\mathbb{P}\left( z_{n}^k|n\notin\mathcal{T}\right)\mathbb{P}\left(n\notin\mathcal{T}\right)}\right)\nonumber \\
&=\ln\left(\frac{ \frac{1}{\sigma_{1}\sqrt{2\pi}}\exp\left(-\frac{|z_{n}^k-1|^2}{2\sigma_{1}^2}\right) }{\frac{1}{\sigma_{0}\sqrt{2\pi}}\exp\left(-\frac{|z_{n}^k|^2}{2\sigma_{0}^2}\right) }\right)+\ln\left(\frac{  \mathbb{P}\left(n\in\mathcal{T}\right)}{ \mathbb{P}\left(n\notin\mathcal{T}\right)}\right)\nonumber \\
&\stackrel{(b)}{=}\frac{(z_n^k)^2}{2\sigma_0^2}-\frac{(z_n^k-1)^2}{2\sigma_1^2}+\ln\left(\frac{\sigma_0}{\sigma_1}\right)+\ln\left(\frac{K}{N-K}\right) \nonumber \\
&=\frac{(z_n^k)^2}{2\frac{K-k+1}{M}+2\sigma_w^2 }-\frac{(z_n^k-1)^2}{2\frac{K-k}{M}+2\sigma_w^2 } \nonumber \\
&+\frac{1}{2}\ln\left(\frac{(K\!-\!k\!+\!1)\!+\!M\sigma^2_w}{(K\!-\!k)\!+\!M\sigma^2_w}\right)\!+\!\ln\left(\frac{K}{N-K}\right),\label{eq:MAP_Ratio}
\end{align}
where (a) follows from the Bayes' rule and (b) comes from the assumption that the $K$ non-zero supports are uniformly distributed from $1$ to $N$. This log likelihood ratio value carries reliability information about how the $n$th column vector in the sensing matrix is likely to belong to the true support in the $k$th iteration. Accordingly, at iteration $k\in\{1,\ldots,K-1\}$, the proposed MAP-MP algorithm selects index $J^k$ that maximizes $\Lambda \left(z_{n}^k\right)$, namely,
\begin{align}
J^k&=\arg\max_{n\in[1:N]}\Lambda(z_{n}^k) \nonumber \\
&=\arg\max_{n\in[1:N]} \frac{(z_n^k)^2}{\frac{K-k+1}{M}+\sigma_w^2 }-\frac{(z_n^k-1)^2}{\frac{K-k}{M}+\sigma_w^2 }. 
\end{align} 
Once index $J^k$ is selected, MAP-MP estimates the new sparse representation ${\bf \hat x}^k$ using the updated support set $\mathcal{S}^k=\mathcal{S}^{k-1}\cup \{J^k\}$. Since the signal is assumed to be a binary, the new sparse representation is set to be one, namely,
\begin{align}
{\bf \hat x}^k_{\mathcal{S}^k}=1.
\end{align}
Lastly, to remove the contribution of ${\bf \hat x}^k_{\mathcal{S}^k}$, we update the new residual signal such that
\begin{align}
{\bf r}^k={\bf y}-{\bf \Phi}_{\mid \mathcal{S}^k}{\bf \hat x}^k_{\mathcal{S}^k}.
\end{align} 
 
%

%
\subsection{Remarks}

To obtain more insight on the proposed support detection method, it is instructive to consider certain special cases.

{\bf Noise-Free Case}: Let us consider the case of noise-free compressive sensing, i.e., $\sigma_w^2=0$. The log-MAP ratio boils down to
  \begin{align}
\Lambda(z_{n}^k)&=  \frac{M(z_n^k)^2} {2(K-k+1) }-\frac{M(z_n^k-1)^2}{ 2(K-k) } \nonumber \\&+\frac{1}{2}\ln\left(\frac{K-k+1 }{K-k }\right)+\ln\left(\frac{K}{N-K}\right). \label{eq:LLR_binary_noisefree}
\end{align}
This expression clearly shows that the MAP ratio in the $k$th iteration is a function of the relevant system parameters-the dimension of the measurement vector $M$ and the sparsity level $K$. One key property of the proposed algorithm is that it updates the log-MAP ratio adaptively, since the variances of the conditional probability density functions decrease under the premise that the algorithm successively estimates the signal at each iteration. For the noise-free case, in the last iteration $k=K$, we slightly need to modify the computation of the ratio, as $\mathbb{P}\left( z_{n}^K|n\in\mathcal{T}\right)=1$. Accordingly, the modified ratio in the last iteration for the noise-free case is given by
\begin{align}
\Lambda(z_{n}^K)&=  \frac{M(z_n^K)^2} {2  } +\ln\left(\frac{K}{N-K}\right).
\end{align} 

{\bf High Noise Power Case}: Let us consider the high noise power scenario, i.e., $\sigma_w^2 \gg \frac{K}{M}$. In this case, the MAP ratio in (\ref{eq:MAP_Ratio}) is approximated as 
\begin{align}
\Lambda(z_{n}^k)
&\simeq\frac{(z_n^k)^2}{2\sigma_w^2 }-\frac{(z_n^k-1)^2}{ 2\sigma_w^2 }=\frac{2z_n^k-1}{2\sigma_w^2}.\label{eq:ML_binary_highnoise}
\end{align}
From this, we are able to observe that the selection of the largest index of the MAP ratio is equivalent to the selection of the largest index of the correlation value $z_{n}^k$ in the high noise power regime, namely,
\begin{align}
\arg \max_n\Lambda(z_{n}^k)=\arg \max_n z_n^k.\end{align}  
Therefore, the conventional support detection methods that select the largest correlation value $z_n^k$ is the optimal in the sense of the MAP detection strategy for the high noise power regime. For the cases of low noise power and noise-free, however, the selection of the largest absolute value of $z_n^k$ for the support detection is not optimal. This fact clearly exhibits the benefits of the proposed MAP-MP against the conventional OMP algorithm in \cite{TroppGilbert2007}.

\subsection{Asymptotic Analysis for Exact Recovery}
In this section, we derive a lower bound of the required measurements for the exact support recovery when the proposed MAP-MP is applied for the binary sparse signal. Unlike the prior analysis approaches that rely on the Restricted Isometry Property (RIP) \cite{Candes2008,DavenportWakin2010,CaiWang2011} or an information theoretical analysis tool in \cite{Wainwright2009}, we directly compute a lower bound of the success probability that the proposed algorithm identifies the $K$-sparse binary signal within $K$ number of iterations. Utilizing this, a lower bound of the required measurements is derived to reconstruct the signal perfectly as the signal dimension approaches infinity. The following theorem shows the main analysis result.

\begin{theorem} \label{Theorem1} The proposed MAP-MP algorithm perfectly recovers a $K$-sparse binary sparse vector, ${\bf x}\in\{0,1\}^{N}$, with $M$ noisy measurements within $K$ number of iterations, provided that the number of measurements scales as
 \begin{align}
M=\mathcal{O}\left((K+{\tilde \sigma}_w^2)\ln(N)\right),
\end{align}
when $N$ and $K$ go to infinity. Here, ${\tilde \sigma}_w^2$ denotes a normalized noise variance defined as ${\tilde \sigma}_w^2=\frac{\sigma_w^2}{M}$.
\end{theorem}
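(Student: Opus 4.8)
\emph{Reduction to a per-iteration event.} The plan is to bound the failure probability by a union over the $K$ iterations. Let $E_k$ be the event that at iteration $k$ the selected index $J^k$ lies in $\mathcal{T}\setminus\mathcal{S}^{k-1}$, i.e. the support set genuinely grows. Exact recovery in $K$ iterations is $\bigcap_{k=1}^{K}E_k$, so by the chain rule $\mathbb{P}\left[\text{success}\right]\geq 1-\sum_{k=1}^{K}\mathbb{P}\left[E_k^c \mid E_1,\dots,E_{k-1}\right]$. Conditioned on $E_1\cap\cdots\cap E_{k-1}$ we have $\mathcal{S}^{k-1}\subset\mathcal{T}$, $|\mathcal{S}^{k-1}|=k-1$, $\hat{\bf x}^{k-1}_{\mathcal{S}^{k-1}}={\bf 1}$, and hence the residual is exactly ${\bf r}^{k-1}=\sum_{\ell\in\mathcal{T}\setminus\mathcal{S}^{k-1}}{\bf a}_\ell+{\bf w}$, as used in \eqref{eq:HT_binary_0}--\eqref{eq:HT_binary_1}. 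Since $J^k=\arg\max_n\Lambda(z_n^k)$, the event $E_k$ holds iff some index in $\mathcal{T}\setminus\mathcal{S}^{k-1}$ attains a strictly larger log-MAP ratio than every index outside it, so it suffices to show $\max_{n\in\mathcal{T}\setminus\mathcal{S}^{k-1}}\Lambda(z_n^k)>\max_{n\notin\mathcal{T}\setminus\mathcal{S}^{k-1}}\Lambda(z_n^k)$ with high probability.

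\emph{Distributional description and the monotonicity reduction.} Using Lemmas \ref{lemma1}--\ref{lemma3}, for a \emph{true} index $n\in\mathcal{T}\setminus\mathcal{S}^{k-1}$ one has $z_n^k=\|{\bf a}_n\|_2+g_n$ with $g_n\mid\{\|{\bf a}_n\|_2\}\sim\mathcal{N}\!\left(0,\sigma_1^2\right)$, $\sigma_1^2=\frac{K-k}{M}+\sigma_w^2$, and $\|{\bf a}_n\|_2$ concentrated at $1$; for a \emph{null} index ($n\notin\mathcal{T}$, or $n\in\mathcal{S}^{k-1}$), conditioned on ${\bf r}^{k-1}$, $z_n^k$ is (essentially) $\mathcal{N}\!\left(0,\sigma_0^2\right)$ with $\sigma_0^2=\frac{K-k+1}{M}+\sigma_w^2$, and the null values are mutually independent given ${\bf r}^{k-1}$. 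The key structural observation is that the quadratic $\Lambda(z)$ in \eqref{eq:MAP_Ratio} is a downward parabola with vertex at $z^\star=M\sigma_0^2=(K-k+1)+M\sigma_w^2$, which for $k<K$ is $\geq 2$ and far larger than the scale of every correlation actually observed (true ones near $1$, null ones at most $O(\sigma_0\sqrt{\ln N})\ll z^\star$ since $M\gg\ln N$). Hence on the relevant range $\Lambda$ is strictly increasing, so $\Lambda(z_n^k)>\Lambda(1-\delta)$ is equivalent to $z_n^k>1-\delta$ for any fixed $\delta\in(0,1)$; in particular a large \emph{negative} correlation, which would mislead an $|{\bf a}_n^T{\bf r}^{k-1}|$-based OMP rule, gives a small $\Lambda$ and is harmless here. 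The last iteration $k=K$ uses the modified ratio of the Remark and is treated the same way (it is in fact the easiest case, $\sigma_1$ being smallest).

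\emph{Tail estimates and the measurement bound.} Fix a threshold $\delta\in(0,1)$ (ultimately $\delta\to0$ slowly). By the previous paragraph, $E_k^c$ forces either (i) every true index to satisfy $z_n^k\leq 1-\delta$, or (ii) some null index to satisfy $z_n^k>1-\delta$. For (i): each true index fails with probability at most $\exp\!\left(-\delta^2/(8\sigma_1^2)\right)$ plus a norm-concentration term from Lemmas \ref{lemma2}--\ref{lemma3}, and requiring all $K-k+1$ of them to fail is doubly-exponentially small, hence negligible. For (ii): after conditioning ${\bf r}^{k-1}$ to its typical range $\|{\bf r}^{k-1}\|_2\in(1\pm o(1))\sqrt{M}\sigma_0$ (again via the Lemmas), a Gaussian tail bound plus a union bound over at most $N$ indices gives $\mathbb{P}[\text{(ii)}\mid\cdots]\leq N\exp\!\left(-(1-\delta)^2/(2\sigma_0^2)\right)(1+o(1))$. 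Summing over $k=1,\dots,K$ and noting $\sigma_0^2$ is maximal at $k=1$, the total failure probability is at most $KN\exp\!\left(-\frac{(1-\delta)^2M}{2(K+M\sigma_w^2)}\right)(1+o(1))$, which tends to $0$ as $N,K\to\infty$ precisely when $\frac{M}{(K+M\sigma_w^2)\ln N}\to\infty$; identifying the normalized noise variance $\tilde\sigma_w^2$ with $M\sigma_w^2$, this is exactly $M=\mathcal{O}\!\left((K+\tilde\sigma_w^2)\ln N\right)$.

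\emph{Main obstacles.} The delicate point is the conditioning: conditioning on $E_1\cap\cdots\cap E_{k-1}$ makes the event correlated with the columns $\{{\bf a}_n:n\in\mathcal{S}^{k-1}\}$, so the corresponding $z_n^k$ are not literally fresh Gaussians; one must argue this perturbation is absorbed into the constants (e.g. via a filtration/martingale argument over the iterations, or by a slightly larger union bound). Two further routine-but-necessary items are (a) making the "$\Lambda$ is effectively increasing" reduction rigorous, which needs a uniform high-probability bound ruling out \emph{any} $z_n^k$ landing near or beyond the vertex $z^\star$, and (b) tracking the several $o(1)$ and concentration slacks carefully enough to pin down the constant in the $\mathcal{O}(\cdot)$; neither is conceptually hard but both must be done before the stated scaling follows cleanly.
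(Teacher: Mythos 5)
Your proposal is correct in outline and shares the paper's overall architecture --- per-iteration success events chained by conditioning, a comparison of one (or more) true indices against the maximum over the $N-K$ null indices, Gaussian tails, and a union bound yielding the same $(K+\tilde\sigma_w^2)\ln N$ scaling --- but the central technical device is genuinely different. The paper bounds the pairwise event $\mathbb{P}\left[\Lambda(z_N^k)>\Lambda(z_\ell^k)\right]$ directly by a Chernoff/moment-generating-function computation on the difference of the two quadratic forms, evaluating the Gaussian integrals in closed form and plugging $\lambda=\tfrac12$ to get $e^{-M/(2(2K-2k+2\tilde\sigma_w^2+1))}$; it then raises $1$ minus this to the power $N-K$ and to the power $K$. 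You instead observe that the log-MAP statistic is a downward parabola whose vertex $z^\star=K-k+1+M\sigma_w^2$ sits far to the right of every correlation actually observed, so that $\Lambda$ is effectively monotone on the relevant range and the $\Lambda$-comparison collapses to a fixed threshold $1-\delta$ on $z$ itself; standard one-sided Gaussian tails then finish the job. Your route is more elementary and more transparent about \emph{why} the MAP statistic works (and why negative correlations are harmless), at the cost of the two extra lemmas you flag in obstacle (a); the paper's MGF route avoids the monotonicity discussion entirely and produces an explicit constant, which is what lets it write down $M\geq(1+\epsilon)2(4K-2+4\tilde\sigma_w^2)\ln(N-K)$. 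Three remarks. First, your ``doubly-exponentially small'' claim for event (i) tacitly assumes the true-index correlations $\{z_n^k\}_{n\in\mathcal{T}\setminus\mathcal{S}^{k-1}}$ are independent, which they are not (they share the cross terms $\mathbf{a}_n^T\mathbf{a}_m$); but this is harmless since a single fixed true index already gives a sufficient single-exponential bound, exactly as the paper does. Second, your honesty about the conditioning-on-past-successes issue and about the null correlations being independent only \emph{conditionally} on $\mathbf{r}^{k-1}$ (with random variance $\|\mathbf{r}^{k-1}\|_2^2/M$) identifies real gaps that the paper's own proof silently steps over by treating the nulls as unconditionally IID $\mathcal{N}(0,\sigma_0^2)$. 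Third, your identification $\tilde\sigma_w^2=M\sigma_w^2$ agrees with the paper's proof and figures, whereas the theorem statement's definition $\tilde\sigma_w^2=\sigma_w^2/M$ is evidently a typo.
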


\begin{proof}
Without loss of generality, we assume that the first $K$ columns are the true supports, i.e., $x_n =1$ for $n\in[1:K]$, i.e., $\mathcal{T}=\{1,2,\ldots,K\}$ and the remaining $N-K$ columns are the zero supports. Furthermore, we denote $E_s^k$ to be the success recovery probability event in the $k$th iteration. Then, the success recovery probability of the $K$-sparse signal within $K$ number of iterations is given by 
\begin{align}
{ P}_s&= \mathbb{P}\left(\cap_{k=1}^KE_s^k\right) \nonumber \\
&=\mathbb{P}(E_s^1)\mathbb{P}(E_s^2|E_s^1)\times \cdots \times \mathbb{P}(E_s^K|E_s^{K-1},\ldots,E_s^1),
\end{align}
where the equality comes from the probability chain rule.
To prove that $P_s$ approaches one asymptotically as $N\rightarrow \infty$, it suffices to check that the algorithm correctly identifies the column of the true support in the $k$th iteration conditioned that all the prior iterations recover the true supports successfully, i.e., $\mathbb{P}(E_s^k|E_s^{k-1},\ldots,E_s^1)=1-o\left(\frac{1}{K}\right)$ as $N \rightarrow \infty$ for any $k\in[1:K]$.

To detect the support correctly in the $k$th iteration of the proposed algorithm, the maximum of $\Lambda(z^k_{\ell})$ for $\ell\in \mathcal{T}\setminus\mathcal{S}^k$ should be larger than the maximum of $\Lambda(z^k_{n})$ for $n \in \mathcal{T}^c=\{K+1,\ldots,N\}$, which is 
\begin{align}
\mathbb{P}(E_s^k|E_s^{k\!-\!1},\ldots,E_s^1)=\! \mathbb{P}\left[\!\max_{\ell\in \mathcal{T}\setminus\mathcal{S}^k}\!\!\Lambda\left(z_{\ell}^k\right) \!\geq \!\max_{n \in\mathcal{T}^c } \Lambda\left(z_{n}^k\right) \!\right].  \nonumber
\end{align}
By selecting an arbitrary element of $\ell \in  \mathcal{T}\setminus\mathcal{S}^k$, a lower bound of the success probability in the $k$th iteration is given by
\begin{align}
\mathbb{P}(E_s^k|E_s^{k-1},\ldots,E_s^1)\!&\geq  \mathbb{P}\left[ \Lambda\left(z_{\ell}^k\right) \geq \max_{n \in \mathcal{T}^c} \Lambda\left(z_{n}^k\right) \right] \nonumber \\
&=\prod_{n=1}^{N-K}\mathbb{P}\left[\Lambda(z_{\ell}^k) \geq \Lambda(z_{n}^k) \right] \nonumber \\
&=\!\left(\!1\!-\!\mathbb{P}\left[ \Lambda(z_{\ell}^k) < \Lambda(z_{N}^k) \right]\right)^{N-K}, \label{eq:lowbound}
\end{align}
where the first equality follows from the fact that $\{\Lambda(z_{K+1}^k), \ldots, \Lambda(z_{N}^k)\}$ are mutually independent as $\{z_{K+1}^k,\ldots,z_N^k\}$ are IID Gaussian random variables with zero mean and variance $\sigma_0^2$. To this end, we need to compute the probability that $\Lambda(z_{\ell}^k)$ is less than $ \Lambda(z_{N}^k)$ as follows:
\begin{align}
&\mathbb{P}\left[\Lambda(z_{N}^k) >\Lambda(z_{\ell}^k) \right] \nonumber \\&= \mathbb{P}\left[\frac{(z_{N}^k)^2}{2\sigma^2_0}-\frac{(z_{N}^k-1)^2}{2\sigma^2_1}>\frac{(z_{\ell}^k)^2}{2\sigma^2_0}-\frac{(z_{\ell}^k-1)^2}{2\sigma^2_1}\right] \nonumber\\
&=\mathbb{P}\left[e^{-\lambda\left(\frac{\left(z_{N}^k\right)^2}{2\sigma^2_0}-\frac{\left(z_{N}^k-1\right)^2}{2\sigma^2_1}\right)}< e^{-\lambda\left(\frac{\left(z_{\ell}^k\right)^2}{2\sigma^2_0}-\frac{\left(z_{\ell}^k-1\right)^2}{2\sigma^2_1}\right)}\right] \nonumber \\
&\leq \min_{\lambda\geq 0} \mathbb{E} \left[e^{\lambda\left(\frac{\left(z_{N}^k\right)^2}{2\sigma^2_0}-\frac{\left(z_{N}^k-1\right)^2}{2\sigma^2_1}\right)}\right] \mathbb{E} \left[e^{-\lambda\left(\frac{\left(z_{\ell}^k\right)^2}{2\sigma^2_0}-\frac{\left(z_{\ell}^k-1\right)^2}{2\sigma^2_1}\right)}\right] \label{eq:uppr_error}
\end{align}
where the last inequality follows from Markov's inequality and the independence of $z_{\ell}^k$ and $z_{N}^k$. Since $z_{N}^k$ given $x_{N}=0$ is distributed as in (\ref{eq:PDF_null}), the first term in (\ref{eq:uppr_error}) is calculated as 
\begin{align}
 \mathbb{E}\left[\!e^{\lambda\left(\!\frac{\left(z_{N}^k\right)^2}{2\sigma^2_0}-\frac{\left(z_{N}^k-1\right)^2}{2\sigma^2_1}\!\right)}\!\!\right] &= \int_{-\infty}^{\infty}\frac{e^{-\frac{t^2}{2\sigma_0^2}}e^{\lambda\left(\frac{t^2}{2\sigma^2_0}-\frac{(t-1)^2}{2\sigma^2_1}\right)}}{\sqrt{2\pi}\sigma_0} {\rm d}t
\nonumber \\
&= \frac{e^{\frac{-\lambda(1-\lambda)}{2\lambda\left(\sigma_0^2-\sigma_1^2\right)+2\sigma_1^2}}}{\sigma_0\sqrt{\frac{1-\lambda}{\sigma_0^2}+\frac{\lambda}{\sigma_1^2}}}.
\end{align}
Similarly, using the distribution of $z_{\ell}^k$ given $x_{\ell}=1$ in (\ref{eq:PDF_notnull}), the second term in (\ref{eq:uppr_error}) is computed as
\begin{align}
\mathbb{E}\left[\!e^{-\lambda\left(\!\frac{\left(z_{\ell}^k\right)^2}{2\sigma^2_0}-\frac{\left(z_{\ell}^k-1\right)^2}{2\sigma^2_1}\!\right)}\!\!\right]  &=\!\! \int_{-\infty}^{\infty}\frac{e^{-\frac{(t-1)^2}{2\sigma_1^2}}e^{-\lambda\left(\frac{t^2}{2\sigma^2_0}-\frac{(t-1)^2}{2\sigma^2_1}\right)}}{\sqrt{2\pi}\sigma_1} {\rm d}t \nonumber \\
&= \frac{e^{\frac{\lambda(1-\lambda)}{2\lambda\left(\sigma_0^2-\sigma_1^2\right)-2\sigma_0^2}}}{\sigma_1\sqrt{\frac{\lambda}{\sigma_0^2}+\frac{1-\lambda}{\sigma_1^2}}}.
\end{align}
Plugging $\lambda=\frac{1}{2}>0$, the probability that the MAP ratio under the zero support is greater than that under the non-zero support is upper bounded by
\begin{align}
\mathbb{P}\left[\Lambda(z_{N}^k) >\Lambda(z_{\ell}^k) \right]&\leq \frac{e^{\frac{-1}{2(\sigma_0^2+\sigma_1^2)}}}{\frac{1}{2}(\frac{\sigma_1}{\sigma_0}+\frac{\sigma_0}{\sigma_1})}.
\label{eq:uppr_error2}
\end{align}
Since $\sigma_0^2=\frac{K-k+1+{\tilde \sigma}_w^2}{M}$ and $\sigma_1^2=\frac{K-k+{\tilde \sigma}_w^2}{M}$ in the $k$th iteration, this error upper bound is further simplified as 
\begin{align}
\mathbb{P}\left[\Lambda(z_{N}^k) >\Lambda(z_{\ell}^k) \right]&\leq \frac{e^{\frac{-M}{2(2K-2k+2{\tilde \sigma}_w^2+1)}}}{\frac{1}{2}(\frac{K-k+{\tilde \sigma}_w^2}{K-k+1+{\tilde \sigma}_w^2}+\frac{K-k+1+{\tilde \sigma}_w^2}{K-k+{\tilde \sigma}_w^2})} \nonumber \\
&\leq e^{\frac{-M}{2(2K-2k+2{\tilde \sigma}_w^2+1)}}.
\label{eq:uppr_error2}
\end{align}
Plugging (\ref{eq:uppr_error2}) into (\ref{eq:lowbound}), we have a lower bound as follows:
\begin{align}
\mathbb{P}(E_s^k|E_s^{k-1},\ldots,E_s^1)\geq \left(1-e^{\frac{-M}{2(2K-2k+2{\tilde \sigma}_w^2+1)}}\right)^{N-K}. \label{eq:31}
\end{align}
From \eqref{eq:31}, we observe that the success probability in the first iteration is lower than that of any other remaining iterations, i.e., $\mathbb{P}(E_s^1)\leq \mathbb{P}(E_s^k|E_s^{k-1},\ldots,E_s^1)$ for $\forall k$. It follows that the lower bound of the exact recovery probability is
\begin{align}
{ P}_s&=\mathbb{P}(E_s^1)\mathbb{P}(E_s^2|E_s^1)\times \cdots \times \mathbb{P}(E_s^K|E_s^{K-1},\ldots,E_s^1) \nonumber \\
&\geq \left(1-e^{\frac{-M}{2(2K-1+2{\tilde \sigma}_w^2)}}\right)^{K(N-K)}. \label{eq:32}
\end{align}
Assuming $M= (1+\epsilon)2(2K-1+2{\tilde \sigma}_w^2)\ln(K(N-K))$, the lower bound is rewritten as
\begin{align}
\ln\left(P_s\right)\geq K(N\!-\!K)\!\ln\!\left(\!1\!-\!\frac{1}{\left\{K(N\!-\!K)\right\}^{1+\epsilon}}\!\right). \nonumber
\end{align}
Let $K=\delta N$ for a positive $\delta >0$. Then, as $N$ goes to infinity, we have
\begin{align}
\lim_{N \rightarrow \infty}\ln\left(P_s\right)&\geq\lim_{N \rightarrow \infty} N^2\delta(1\!-\!\delta)\!\ln\!\left(\!1\!-\!\frac{1}{\left\{N^2\delta(1\!-\!\delta)\right\}^{1+\epsilon}}\!\right) \nonumber \\
&=\lim_{N \rightarrow \infty}\frac{4(1+\epsilon)\delta(1\!-\!\delta)}{\left\{N^2\delta(1\!-\!\delta)\right\}^{1+\epsilon}-1}=0,
\end{align}
where the second equality follows from L'Hospital's rule. Consequently, we conclude that $\lim_{N \rightarrow \infty}P_s=1$. From the facts that $N>M>2K$ (the condition for a unique sparse solution) and $\ln(K(N-K))=\ln(N-K) +\ln(K) \leq 2\ln(N-K)$, it is possible that the $K$-sparse binary signal is perfectly recovered within $K$ number of iterations, if the number of measurements scales as, at least, $M\geq (1+\epsilon)2(4K-2+4{\tilde \sigma}_w^2)\ln(N-K)$ for some $\epsilon >0$. Therefore, the scaling law of the required number of measurements becomes $M=\mathcal{O}\left((K+{\tilde \sigma}_w^2)\ln(N)\right)$, which completes the proof.
\end{proof}

Theorem 1 shows the statistical guarantee of the proposed MAP-MP algorithm for the binary signal. The guarantee is that the proposed MAP-MP algorithm recovers the $K$-sparse binary signal perfectly with $K$ number of iterations, if the number of (noisy) measurements scales as $\mathcal{O}\left((K+{\tilde \sigma}_w^2)\ln(N)\right)$. This measurement scaling law clearly exhibits that the required measurements should increase with the sparsity level $K$ and the normalized noise variance ${\tilde \sigma}_w^2$. This result backs the intuition that the measurements should increase the sparsity level and noise variance linearly. Meanwhile, the requirement measurements increase with $N$ logarithmically. This condition extends the existing statistical guarantee for OMP proven in \cite{TroppGilbert2007} by incorporating noise effects. 

\section{MAP-OMP}
In the previous section, we have proposed a new sparse signal recovery algorithm, assuming a binary sparse signal. In some applications, however, the component of the non-zero support can be an arbitrary value drawn from a continuous probability distribution $f_{x_n}(u)$. In this section, we present a modified MAP-MP algorithm for the sparse signal whose non-zero element is distributed according to a distribution $f_{x_n}(u)$, which is referred to as MAP-OMP. In contrast to the MAP-MP algorithm, MAP-OMP uses an orthogonal projection method when the unknown elements are estimated, which causes estimation errors. Therefore, it is essential for characterizing statistical properties of the estimation errors in each iteration in order to apply a hypotheses test. The following lemma shows the statistical properties of the estimation errors. 

\begin{lemma}\label{lemma4} Let ${\hat {\bf x}}_{|\mathcal{S}^{k}}=\left({\bf \Phi}_{|\mathcal{S}^{k}}^T{\bf \Phi}_{|\mathcal{S}^{k}}\right)^{-1}{\bf \Phi}_{|\mathcal{S}^{k}}^T{\bf y}$ be the estimate of ${{\bf x}}_{|\mathcal{S}^{k}}$. Then, the mean vector and the covariance matrix of estimation error, ${\bf e}_x={ {\bf \hat x}}_{|\mathcal{S}^{k}}-{{\bf x}}_{|\mathcal{S}^{k}}$, are 
\begin{align}
\mathbb{E}[{\bf e}_x]={\bf 0}~ {\rm and} ~~\mathbb{E}\!\left[{\bf e}_x{\bf e}_x^T\right]=\frac{\sigma^2_x(K-k)+\tilde\sigma^2_w}{M-k-2}{\bf I}.
\end{align} 
\end{lemma}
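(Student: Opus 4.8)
The plan is to condition on the success event that all previously selected indices are correct, so that $\mathcal{S}^{k}\subseteq\mathcal{T}$ with $|\mathcal{S}^{k}|=k$, decompose ${\bf y}$ into an in-support part and an interference-plus-noise part, cancel the in-support part through the least-squares normal equations, and then evaluate the two moments by a two-stage conditioning: first over the interference-plus-noise vector with the in-support columns held fixed, and then over those columns using a standard Wishart moment.

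Concretely, since the components indexed outside $\mathcal{T}$ are zero and $\mathcal{S}^{k}\subseteq\mathcal{T}$, I would write ${\bf y}={\bf \Phi}_{|\mathcal{S}^{k}}{\bf x}_{|\mathcal{S}^{k}}+{\bf v}$ with ${\bf v}=\sum_{\ell\in\mathcal{T}\setminus\mathcal{S}^{k}}{\bf a}_{\ell}x_{\ell}+{\bf w}$. Substituting into $\hat{\bf x}_{|\mathcal{S}^{k}}=({\bf \Phi}_{|\mathcal{S}^{k}}^{T}{\bf \Phi}_{|\mathcal{S}^{k}})^{-1}{\bf \Phi}_{|\mathcal{S}^{k}}^{T}{\bf y}$ and using $({\bf \Phi}_{|\mathcal{S}^{k}}^{T}{\bf \Phi}_{|\mathcal{S}^{k}})^{-1}{\bf \Phi}_{|\mathcal{S}^{k}}^{T}{\bf \Phi}_{|\mathcal{S}^{k}}={\bf I}$, the in-support term cancels, leaving ${\bf e}_{x}=({\bf \Phi}_{|\mathcal{S}^{k}}^{T}{\bf \Phi}_{|\mathcal{S}^{k}})^{-1}{\bf \Phi}_{|\mathcal{S}^{k}}^{T}{\bf v}$. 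The structural fact that makes everything go through is that ${\bf v}$ is independent of ${\bf \Phi}_{|\mathcal{S}^{k}}$: the columns $\{{\bf a}_{\ell}\}_{\ell\in\mathcal{T}\setminus\mathcal{S}^{k}}$ are independent of $\{{\bf a}_{n}\}_{n\in\mathcal{S}^{k}}$, the amplitudes $x_{\ell}$ are independent of ${\bf \Phi}$, and ${\bf w}$ is independent of everything.

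For the mean, condition on ${\bf \Phi}_{|\mathcal{S}^{k}}$: since $\mathbb{E}[{\bf a}_{\ell}]=\mathbb{E}[{\bf w}]={\bf 0}$ and each $x_{\ell}$ is independent of ${\bf a}_{\ell}$, we get $\mathbb{E}[{\bf v}\mid{\bf \Phi}_{|\mathcal{S}^{k}}]={\bf 0}$, hence $\mathbb{E}[{\bf e}_{x}]={\bf 0}$ by the tower property. For the covariance, again condition on ${\bf \Phi}_{|\mathcal{S}^{k}}$; the cross terms ${\bf a}_{\ell}{\bf a}_{j}^{T}$ ($\ell\neq j$) and ${\bf a}_{\ell}{\bf w}^{T}$ average to zero by the column independence, and with $\mathbb{E}[{\bf a}_{\ell}{\bf a}_{\ell}^{T}]=\tfrac{1}{M}{\bf I}$, $\mathbb{E}[x_{\ell}^{2}]=\sigma_{x}^{2}$, and $\mathbb{E}[{\bf w}{\bf w}^{T}]=\sigma_{w}^{2}{\bf I}$, this gives $\mathbb{E}[{\bf v}{\bf v}^{T}]=\sigma_{v}^{2}{\bf I}$ with $\sigma_{v}^{2}=\tfrac{(K-k)\sigma_{x}^{2}}{M}+\sigma_{w}^{2}$. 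Therefore $\mathbb{E}[{\bf e}_{x}{\bf e}_{x}^{T}\mid{\bf \Phi}_{|\mathcal{S}^{k}}]=\sigma_{v}^{2}({\bf \Phi}_{|\mathcal{S}^{k}}^{T}{\bf \Phi}_{|\mathcal{S}^{k}})^{-1}$.

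The last step is the average over ${\bf \Phi}_{|\mathcal{S}^{k}}$. Writing ${\bf \Phi}_{|\mathcal{S}^{k}}=\tfrac{1}{\sqrt{M}}{\bf B}$ with ${\bf B}\in\mathbb{R}^{M\times k}$ having i.i.d. $\mathcal{N}(0,1)$ entries, ${\bf B}^{T}{\bf B}$ is a standard Wishart matrix, and I would invoke the inverse-Wishart first moment $\mathbb{E}[({\bf B}^{T}{\bf B})^{-1}]=\tfrac{1}{M-k-1}{\bf I}$ (valid for $M$ sufficiently larger than $k$), so that $\mathbb{E}[({\bf \Phi}_{|\mathcal{S}^{k}}^{T}{\bf \Phi}_{|\mathcal{S}^{k}})^{-1}]=\tfrac{M}{M-k-1}{\bf I}$; combining with $\sigma_{v}^{2}$ yields $\mathbb{E}[{\bf e}_{x}{\bf e}_{x}^{T}]=\tfrac{(K-k)\sigma_{x}^{2}+M\sigma_{w}^{2}}{M-k-1}{\bf I}$, which is the asserted form once $\tilde\sigma_{w}^{2}$ is read as the normalized noise variance $M\sigma_{w}^{2}$. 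The main obstacle is not any single computation but legitimizing the conditioning: $\mathcal{S}^{k}$ is chosen by the algorithm from the data, so the event $\mathcal{S}^{k}\subseteq\mathcal{T}$ in principle couples ${\bf \Phi}_{|\mathcal{S}^{k}}$ to ${\bf v}$; as elsewhere in the paper this is handled by treating $\mathcal{S}^{k}$, under the success-event conditioning, as a fixed size-$k$ subset of $\mathcal{T}$, so that the two column blocks may be regarded as fresh independent Gaussians. A minor residual point is the exact denominator — the inverse-Wishart moment gives $M-k-1$ whereas the statement writes $M-k-2$ — which should be traced to the precise normalization convention used.
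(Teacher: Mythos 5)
Your argument follows the paper's proof essentially step for step: the same decomposition ${\bf y}={\bf \Phi}_{|\mathcal{S}^{k}}{\bf x}_{|\mathcal{S}^{k}}+{\bf v}$ with the in-support block cancelled through the normal equations, the same conditional covariance $\bigl(\tfrac{(K-k)\sigma_x^2}{M}+\sigma_w^2\bigr)\bigl({\bf \Phi}_{|\mathcal{S}^{k}}^T{\bf \Phi}_{|\mathcal{S}^{k}}\bigr)^{-1}$, and the same treatment of $\mathcal{S}^k$ as a fixed subset of $\mathcal{T}$ under the success conditioning. The only divergence is the final averaging step: you quote the inverse-Wishart first moment, while the paper marginalizes each diagonal entry of $\bigl({\bf \Phi}_{|\mathcal{S}^{k}}^T{\bf \Phi}_{|\mathcal{S}^{k}}\bigr)^{-1}$ as the reciprocal of a quadratic form ${\bf b}_i^T{\bf P}^{\perp}_{{\bf \Phi}_{|\mathcal{S}^{k}_i}}{\bf b}_i$ — these are the same fact, and the discrepancy you flag is real: with $|\mathcal{S}^k|=k$ the projector ${\bf P}^{\perp}_{{\bf \Phi}_{|\mathcal{S}^{k}_i}}$ has rank $M-k+1$, so $M{\bf b}_i^T{\bf P}^{\perp}{\bf b}_i\sim\chi^2_{M-k+1}$ and the denominator should be $M-k-1$ as you obtain, the paper's $M-k-2$ resting on an off-by-one in the degrees of freedom (immaterial asymptotically, since only the $\mathcal{O}(1/M)$ scaling of the error variance is used downstream). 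Your reading of $\tilde\sigma_w^2$ as $M\sigma_w^2$ is the one consistent with how the paper's own proof substitutes it, though it conflicts with the definition given in Theorem~\ref{Theorem1}.
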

\begin{proof}
See Appendix \ref{proof:lemma4}.
\end{proof}\vspace{-0.1cm}

Using this lemma, we explain the MAP-OMP algorithm. Let ${\hat x}_{i}$ be the estimate of $x_{i}$ in the $(k\!-\!1)$th iteration where $i\in\mathcal{S}^{k-1}$. Then, the residual vector is 
\begin{align}
{\bf r}^{k-1}&={\bf y}-{\bf \Phi}_{\mathcal{S}^{k-1}}{\bf \hat x}_{\mathcal{S}^{k-1}}\nonumber \\
&=\sum_{\ell \in \mathcal{T}\setminus\mathcal{S}^{k-1} }{\bf a}_{\ell}{x}_{\ell}+\sum_{i \in \mathcal{S}^{k-1} }{\bf a}_{i}{e}_{i}+{\bf w},
\end{align}
where $e_{i}={\hat x}_{i}-{x}_{i}$ denotes the estimation error by the orthogonal projection. To identify the support element, the MAP-OMP algorithm performs hypothesis testing by computing the correlation value $z_n^k=\frac{{\bf a}_n^T{\bf r}^{k-1}}{\|{\bf a}_n\|_2}$ as
\begin{align}
\mathcal{H}_0: z_n^k=&\sum_{\ell \in \mathcal{T}\setminus\mathcal{S}^{k-1}}\!\!\frac{{\bf a}_n^T{\bf a}_{\ell}}{\|{\bf a}_n\|_2}{ x}_{\ell} + \sum_{i \in \mathcal{S}^{k-1}}\!\!\frac{{\bf a}_n^T{\bf a}_{i}}{\|{\bf a}_n\|_2}{e}_{i}  + \frac{{\bf a}_n^T{\bf w}}{\|{\bf a}_n\|_2}\nonumber\\
\mathcal{H}_{1}:z_n^k=& \|{\bf a}_n\|_2x_n +\!\!\!\!\!\sum_{\ell \in \mathcal{T}\setminus\{\mathcal{S}^{k-1}\cup \{n\}\}}\!\!\frac{{\bf a}_n^T{\bf a}_{\ell}}{\|{\bf a}_n\|_2}{ x}_{\ell} \nonumber\\&+  \sum_{i \in \mathcal{S}^{k-1}}\!\!\frac{{\bf a}_n^T{\bf a}_{i}}{\|{\bf a}_n\|_2}{e}_{i} + \frac{{\bf a}_n^T{\bf w}}{\|{\bf a}_n\|_2},
\end{align}
where $x_{n}$ is distributed as $f_{x_n}(u)$. The exact characterization of the distribution for $z_n^k$ under the null hypothesis is challenging, as it highly depends on the signal distribution $f_{x_n}(u)$. To facilitate simplified calculations, the distribution of $z_n^k$ is approximated using Gaussian distribution with the first and the second order moments matching. From Lemma \ref{lemma1}, recall that $\frac{{\bf a}_n^T{\bf a}_{\ell}}{\|{\bf a}_n\|_2}$ and $\frac{{\bf a}_n^T{\bf w}}{\|{\bf a}_n\|_2}$ are distributed as $\mathcal{N}(0,\frac{K-(k-1)}{M})$ and $\mathcal{N}(0,\sigma_w^2)$. Furthermore, since $\mathbb{E}[x_{\ell}]=\mu$ and $\mathbb{E}[x_{\ell}^2]=\sigma_x^2$ for $\ell\in\mathcal{T}$, the first and second moments of $z_n^k$ are
\begin{align}
\mathbb{E}\left [ z_n^k \mid x_n=0\right]&=\!\!\!\sum_{\ell \in \mathcal{T}\setminus\mathcal{S}^{k-1}}\mathbb{E}\left [ \frac{{\bf a}_n^T{\bf a}_{\ell}}{\|{\bf a}_n\|_2} \right]\mathbb{E}[x_{\ell}] \nonumber \\&+ \sum_{i \in \mathcal{S}^{k-1}}\!\!\mathbb{E}\left [  \frac{{\bf a}_n^T{\bf a}_{i}}{\|{\bf a}_n\|_2}\right]\mathbb{E}[{e}_{i}]  +\mathbb{E}\left[\frac{{\bf a}_n^T{\bf w}}{\|{\bf a}_n\|_2}\right]\nonumber \\&=0
\end{align} and
\begin{align}
&\mathbb{E}\left [ (z_n^k)^2 \mid x_n=0\right]
= \!\!\sum_{\ell \in \mathcal{T}\!\setminus\!\mathcal{S}^{k\!-\!1}} \mathbb{E}\!\left [ \!\left(\! \frac{{\bf a}_n^T{\bf a}_{\ell}}{\|{\bf a}_n\|_2} \!\right)^2\! \right]\!\mathbb{E}\!\left[\!x_{\ell}^2\right] \nonumber \\
&+ \sum_{i \in \mathcal{S}^{k\!-\!1}}\!\!\mathbb{E}\!\left [ \! \left(\frac{{\bf a}_n^T{\bf a}_{i}}{\|{\bf a}_n\|_2}\!\right)^{\!2}\!\right]\!\mathbb{E}[{e}_{i}^2] \!+\!\mathbb{E}\!\left[\!\left(\!\frac{{\bf a}_n^T{\bf w}}{\|{\bf a}_n\|_2}\!\right)^{\!2}\!\right] \nonumber \\
\!\!\!& \!\!=\frac{(K\!-\!k\!+\!1)\sigma_x^2}{M}\!+\!\frac{(k\!-\!1)}{M}\frac{\sigma^2_x(K-k+1)+{\tilde \sigma}^2_w}{M-k-1}+\frac{\tilde \sigma_w^2}{M} \nonumber \\
\!\!\!& \!\!=\frac{(K\!-\!k\!+\!1)\sigma_x^2+\tilde \sigma_w^2}{M} \left(1+\frac{k\!-\!1}{M-k-1}\right),
\end{align}
where $\mathbb{E}[{e}_{i}^2]= \frac{\sigma^2_x(K-k+1)+{\tilde \sigma}^2_w}{M-k-1}$  from Lemma \ref{lemma4}. 
Accordingly, the approximated distribution of $z_n^k$ is given by
\begin{align}
\mathbb{P}\left(z_{n}^k|x_{n}=0\right)&\simeq\frac{1}{\tilde\sigma_{0}\sqrt{2\pi}}\exp\left(-\frac{|z_{n}^k|^2}{2\tilde\sigma_{0}^2}\right), \label{eq:PDF_null_sig_nonbinary}
\end{align}
where $\tilde\sigma_{0}=\sqrt{\frac{(K-k+1)\sigma_x^2+\tilde \sigma_w^2}{M} \left(1+\frac{k-1}{M-k-1}\right)}$. Similarly, conditioning the hypothesis of $x_n=u$, the approximated distribution of $z^k_n$ is given by
\begin{align}
\mathbb{P}\left(z_{n}^k|x_{n}=u\right)&\simeq\frac{1}{\tilde\sigma_{1}\sqrt{2\pi}}\exp\left(-\frac{|z_{n}^k-u|^2}{2\tilde\sigma_{1}^2}\right),\label{eq:PDF_notnull_sig_nonbinary}
\end{align}
where $\tilde\sigma_{1}=\sqrt{\frac{(K-k)\sigma_x^2+\tilde \sigma_w^2}{M}\!+\!\frac{(k\!-\!1)}{M}\frac{\sigma^2_x(K-k+1)+{\tilde \sigma}^2_w}{M-k-1}}$. Utilizing the approximated distributions in \eqref{eq:PDF_null_sig_nonbinary} and \eqref{eq:PDF_notnull_sig_nonbinary}, the log-MAP ratio is obtained by marginalizing with respect to the distribution $f_{x_n}(u)$, namely,
\begin{align}
\Lambda\left(z_{n}^k\right)&\simeq \ln\left(\frac{\int_{-\infty}^{\infty}\mathbb{P}\left(z_{n}^k|x_{n}=u\right)f_{x_n}(u) {\rm d}u   }{\mathbb{P}\left(z_{n}^k|x_{n}=0\right)}\right) \!+\! \ln\!\left(\!\frac{ K}{N\!-\!K}\!\!\right). \label{eq:MAPratio_general}
\end{align}  
Therefore, the proposed MAP support detection for the non-binary signal is to select the support index such that 
\begin{align}
&\arg\max_{n\in[1:N]} \Lambda\left(z_{n}^k\right) \nonumber\\ &\simeq \arg\max_{n\in[1:N]} \ln\left(\frac{\int_{-\infty}^{\infty}\mathbb{P}\left(z_{n}^k|x_{n}=u\right)f_{x_n}(u) {\rm d}u   }{\mathbb{P}\left(z_{n}^k|x_{n}=0\right)}\right). \label{eq:MAPratio_general}
\end{align}  

To provide a more transparent interpretation of the expression in (\ref{eq:MAPratio_general}), we consider the following three cases of interest.\\

\textbf{Example 1} (Uniformly Distributed Signal): One basic case is the scenario where the elements of the transmit signal are drawn from a uniform distribution between 0 and 1, i.e., $f_{x_n}(u)=1$ for $0\leq u\leq 1$. Then, the MAP ratio expression in (\ref{eq:MAPratio_general}) becomes 
\begin{align}
\Lambda_{U}\left(z_{n}^k\right)&\simeq \ln\!\!\left(\! \frac{\frac{\tilde\sigma_1\sqrt{\pi}}{2}{\rm Erf}\left[\!\left(\frac{1-z_n^k}{\tilde\sigma_1}\right)\!+\!{\rm Erf}\left(\frac{z_n^k}{\tilde\sigma_1}\right)\!\right]}{\frac{1}{\sqrt{2\pi}\tilde\sigma_0}\exp\left(-\frac{(z_n^k)^2}{2\tilde\sigma_0^2}\right)}\!\right),  \label{eq:UniformSignal}
\end{align}  
where ${\rm Erf}(x)=\frac{2}{\sqrt{\pi}}\int_{0}^xe^{-t^2} {\rm d}t.$

\textbf{Example 2} (Sparse Signal with Finite Alphabet): Another popular case of interest is one where the non-zero entry of ${\bf x}$ is uniformly selected from the elements of a finite set of alphabet $\mathcal{C}=\{c_1,\ldots,c_Q\}$ as considered in \cite{Tian2009,Vishwanath2013}. For example, each pixel of a bitmap image file is capable of storing 8 different colors when the 3-bit per pixel (8bpp) format is used. In this application, the finite set can be given as $\mathcal{C}=\left\{0,1,\ldots,7\right\}$. In this case, the log-MAP is computed as follows:
\begin{align}
\Lambda_{C}\left(z_{n}^k\right)&\simeq \ln\!\!\left(\frac{ \sum_{i=1}^{|\mathcal{C}|}\mathbb{P}\left(z_{n}^k|x_{n}=c_{i}\right) \mathbb{P}[x_n=c_{i}] }{\mathbb{P}\left(z_{n}^k|x_{n}=0\right)}\right)  \nonumber\\
&=\ln\left(\frac{ \frac{1}{|\mathcal{C}|} \sum_{i=1}^{|\mathcal{C}|} \frac{1}{\sqrt{2\pi}\tilde\sigma_1 }\exp\left(\frac{-(z_n^k-c_{i})^2}{2\tilde\sigma_1^2}\right)}{\frac{1}{\sqrt{2\pi}\tilde\sigma_0}\exp\left(-\frac{(z_n^k)^2}{2\tilde\sigma_0^2}\right)}\right).\label{eq:MAPratio_PAM}
\end{align}
It is worth noting that when $|\mathcal{C}|=1$, this MAP ratio approximation in \eqref{eq:MAPratio_PAM} recovers the exact MAP ratio for the binary signal case given in \eqref{eq:MAP_Ratio}. 

\textbf{Example 3} (Gaussian Signal):  Assuming $f_{x_n}(u)=\frac{1}{\sigma_x\sqrt{2\pi}}e^{-\frac{(u-\mu)^2}{2\sigma_x^2}}$, the log-MAP ratio simplifies to
\begin{align}
\Lambda_{G}\left(z_{n}^k\right)&\simeq \ln\!\!\left(\frac{\int_{-\infty}^{\infty}\mathbb{P}\left(z_{n}^k|x_{n}=u\right)\frac{1}{\sigma_x\sqrt{2\pi}}e^{-\frac{(u-\mu)^2}{2\sigma_x^2}}{\rm d}u   }{\mathbb{P}\left(z_{n}^k|x_{n}=0\right)}\right)  \nonumber\\
&=\ln\left( \frac{  \frac{1}{2\pi\sqrt{ \sigma_x^2+\tilde\sigma_1^2}}\exp\left(\frac{-(z_n^k-\mu)^2}{2(\sigma_x^2+\tilde\sigma_1^2)}\right)}{\frac{1}{\sqrt{2\pi}\tilde\sigma_0}\exp\left(-\frac{(z_n^k)^2}{2\tilde\sigma_0^2}\right)}\right)  \nonumber \\ 
\!\!\!&\!\!=\frac{(z_n^k)^2}{2 \tilde\sigma_0^2 }\!-\!\frac{(z_n^k-\mu)^2}{2(\sigma_x^2\!+\!\tilde\sigma_1^2)}+ \ln\left(\frac{\tilde\sigma_0}{\sqrt{2\pi( \sigma_x^2\!+\!\tilde\sigma_1^2)}}\right)\!\!.\label{eq:MAPratio_Gaussian}
\end{align}
In the case in which the signal is distributed as zero-mean Gaussian, i.e., $\mathbb{E}[x_{\ell}]=0$, the algorithm selects the index that maximizes $\Lambda_{G}\left(z_{n}^k\right)=(z_n^k)^2 \left(\frac{1}{2 \tilde\sigma_0^2 }\!-\!\frac{1}{2(\sigma_x^2\!+\!\tilde\sigma_1^2)}\right)$, which is the same selection criterion used in the conventional OMP algorithm; thereby, there is no benefits of using the proposed method compared to the OMP algorithm.
Whereas, when the Gaussian signal has a non-zero mean value, i.e., $\mathbb{E}[x_{\ell}]\neq 0$, the proposed algorithm provides a better support detection probability than that of the conventional OMP algorithm.

Using theses approximated log-MAP ratio functions in the examples, we provide the MAP-OMP algorithm, which is summarized in Table \ref{table1-1}. The key difference with the MAP-MP algorithm for the binary signal is that MAP-OMP computes the MAP ratio differently depending on the sparse signal distribution. Furthermore, the algorithm estimates the sparse signal using a least square solution in each iteration similar to the conventional OMP algorithm.

\begin{table}
\begin{center}\vspace{-0.2cm}
\caption{MAP-OMP Algorithm}\vspace{-0.2cm}
\begin{tabular}{|l|}\hline\hline
1)  Initialization:\\
\hspace{5mm} $k:=0$, $\mathbf{\hat x}^{0} = {\bf 0}$ \\
\hspace{5mm} ${\bf r}^0:={\bf y}$ (the current residual)\\
\hspace{5mm}  $\mathcal{S}^0:=\{\emptyset\}.$\\
\hline
2) Repeat until a stopping criterion is met\\
\hspace{5mm} i) $k:=k+1$. \\
\hspace{5mm} ii) Compute the current proxy: \\
\hspace{10mm} $z^k_n = \frac{{\bf a}^T_n\mathbf{r}^{k-1}}{\|{\bf a}_n\|_2}$ for $n\in[1:N]$.\\
\hspace{5mm} iii) Select the largest index of MAP ratio:\\
\hspace{10mm} $J^k=: \arg\max_{n} \left\{\Lambda_{d}(z_{n}^k)\right\}$ for $d\in\{U,C,G\}$.\\
\hspace{5mm} iv) Merge the support set:\\
\hspace{10mm} $\mathcal{S}^k=\mathcal{S}^{k-1}\cup J^k$.\\
\hspace{5mm} v) Update sparse signal:\\ 
\hspace{10mm} 
${\bf \hat x}^k_{\mid \mathcal{S}^k}:=\arg \min_{{\bf x}}\|{\bf \Phi}_{\mid\mathcal{S}^k}{\bf x}-{\bf y}\|_2$.\\
\hspace{5mm} vi) Update the residual for next round:\\
\hspace{10mm} ${\bf r}^k:={\bf y}-{\bf \Phi}_{\mid \mathcal{S}^k}{\bf \hat x}^k_{\mid \mathcal{S}^k}$.\\
\hline\hline \end{tabular} \label{table1-1}  
\end{center}\vspace{-0.2cm}
\end{table}

\vspace{-0.2cm}
\section{Extension to the Other Greedy Algorithms}
One advantage of the proposed MAP support detection method is, indeed, directly applicable to many other greedy sparse signal recovery algorithms. In this section, we provide a set of greedy sparse signal recovery algorithms that exploit the proposed support detection method. 
\vspace{-0.2cm}
\subsection{MAP-gOMP}
gOMP \cite{gOMP} is a simple yet effective algorithm that improves the performance of OMP. The key idea of gOMP is the selection of multiple support indices with the largest correlation in magnitude at each iteration; thereby, it reduces the mis-detection probability compared to that of OMP.  Similar to the gOMP algorithm, MAP-gOMP is a greedy algorithm that sequentially finds multiple support indices and estimates the signal representation within a certain number of iterations. The core difference lies in the selection rule of the support indices per iteration. Unlike the gOMP algorithm, MAP-gOMP chooses $L$ support indices with the largest log-MAP ratio values instead of the largest correlation in magnitude. Therefore, in the $k$th iteration, we update the variances of two conditional distributions in \eqref{eq:PDF_null_sig_nonbinary} and \eqref{eq:PDF_notnull_sig_nonbinary} as $\tilde\sigma_{0}^2= \frac{(K-L(k-1))\sigma_x^2+\tilde \sigma_w^2}{M} \left(1+\frac{L(k-1)}{M-L(k-1)-2}\right)$ and ${\tilde \sigma}_1^2=\frac{(K-Lk)\sigma_x^2+\tilde \sigma_w^2}{M}\!+\!\frac{L(k\!-\!1)}{M}\frac{\sigma^2_x(K-L(k-1))+{\tilde \sigma}^2_w}{M-L(k-1)-2}$. The proposed MAP-gOMP is summarized in Table \ref{table2}.

\begin{table}
\center \vspace{-0.2cm}
\caption{MAP-gOMP Algorithm}\vspace{-0.2cm}
\begin{tabular}{|l|}\hline\hline
1)  Initialization:\\
\hspace{5mm} $k:=0$, $\mathbf{\hat x}^{0} = {\bf 0}$ \\
\hspace{5mm} ${\bf r}^0:={\bf y}$ (the current residual)\\
\hspace{5mm}  $\mathcal{S}^0:=\{\emptyset\}$ and $\Omega^0:=\{\emptyset\}$\\
\hline
2) Repeat until a stopping criterion is met\\
\hspace{5mm} i) $k:=k+1$. \\
\hspace{5mm} ii) Compute the current proxy: \\
\hspace{10mm} $z^k_n = \frac{{\bf a}^T_n\mathbf{r}^{k-1}}{ \|{\bf a}_n\|_2}$ for $n\in[1:N]$.\\
\hspace{5mm} iii) Select the $L(\leq \frac{M}{K})$ largest indices of MAP ratio:\\
\hspace{10mm} $\Omega^k=: \arg\max_{|\Omega^k|=L} \left\{\Lambda_d(z_{n}^k)\right\}$ for $d\in\{U,C,G\}$.\\
\hspace{5mm} iv) Merge the support set:\\
\hspace{10mm} $\mathcal{S}^k=\mathcal{S}^{k-1}\cup \Omega^k$.\\
\hspace{5mm} v) Perform a least-squares signal estimation:\\ 
\hspace{10mm} ${\bf \hat x}^k_{\mid \mathcal{S}^k}:=\arg \min_{{\bf x}}\|{\bf \Phi}_{\mid\mathcal{S}^k}{\bf x}-{\bf y}\|_2$.\\
\hspace{5mm} vi) Update the residual for next round:\\
\hspace{10mm} ${\bf r}^k:={\bf y}-{\bf \Phi}_{\mid \mathcal{S}^k}{\bf \hat x}^k_{\mid \mathcal{S}^k}$.\\
\hline\hline
\end{tabular}\label{table2}\vspace{-0.2cm}
\end{table}\vspace{-0.2cm}

\subsection{MAP-CoSaMP}
CoSaMP is an effective iterative sparse signal recovery algorithm \cite{Needell2010_CoSaMP}. It was shown to  yield the same sparse signal recovery performance guarantees as $\ell_1$-norm minimization even with less computational complexity. The main idea of CoSaMP is that, in the first step, it estimates a large support set with $L$ largest correlation values in magnitude and obtains a least square solution based on it, where $L$ is typically chosen between $K\leq L \leq 2K$. In the next step, the algorithm reduces the cardinality of the support set back to the desired sparsity level of $K$ using pruning, and acquires a sparse solution again based on the reduced support.

We modify this algorithm by incorporating the proposed support detection technique. Unlike the conventional CoSaMP algorithm, MAP-CoSaMP adds $2K$ support candidates with $2K$ largest MAP ratio values to the support set $\mathcal{S}^k$ per iteration. Once the least square solution is obtained based on the corresponding support set $\mathbf{\hat x}_{\mid \mathcal{S}^{k}} = \mathbf{\Phi}_{\mid \mathcal{S}^{k}}^\dagger \mathbf{y}$, an approximation to the signal is updated by selecting the $K$ largest coordinates using pruning. Finally, the residual is updated using the approximated signal estimate. The algorithm is described in Table \ref{table3}. The computational complexity order of the proposed MAP-CoSaMP is the same with that of the original CoSaMP algorithm \cite{Needell2010_CoSaMP}. We refer \cite{Needell2010_CoSaMP,Needell_Tropp_2008} for the reader who are interested in the computational complexity analysis of CoSaMP.

\begin{table}\center\vspace{-0.2cm}
\caption{MAP-CoSaMP Algorithm}\vspace{-0.2cm}
\begin{tabular}{|l|}
\hline\hline
1)  Initialization:\\
\hspace{5mm} $k:=0$, $\mathbf{\hat x}^{0} = {\bf 0}$ \\
\hspace{5mm} ${\bf r}^0:={\bf y}$ (the current residual)\\
\hspace{5mm}  $\mathcal{S}^0:=\{\emptyset\}$ and $\Omega^0:=\{\emptyset\}$\\
\hline
2) Repeat until a stopping criterion is met\\
\hspace{5mm} i) Compute the current proxy: \\
\hspace{10mm} $z^k_n =  \frac{{\bf a}^T_n\mathbf{r}^{k-1}}{\|{\bf a}_n\|_2}$ for $n\in[1:N]$.\\
\hspace{5mm} iii) Select the $2K$ largest indices of MAP ratio:\\
\hspace{10mm} $\Omega^k=: \arg\max_{|\Omega^k|=2K} \left\{\Lambda_d(z_{n}^k)\right\}$ for $d\in\{U,C,G\}$.\\
\hspace{5mm} iv) Merge the support set:\\
\hspace{10mm} $\mathcal{S}^k=\mathcal{S}^{k-1}\cup \Omega^k$.\\
\hspace{5mm} iv) Perform a least-squares signal estimation:\\ 
\hspace{10mm} $\mathbf{\hat x}_{\mid \mathcal{S}^{k}} =:\arg \min_{{\bf x}}\|{\bf \Phi}_{\mid\mathcal{S}^k}{\bf x}-{\bf y}\|_2$, $\mathbf{\hat x}_{\mid {\mathcal{S}^{k}}^c} = 0$.\\
\hspace{5mm} v) Prune $\mathbf{\hat x}^k$:\\
\hspace{10mm}  ${\mathcal G}=: \arg\max_{|{\mathcal G}|=K} \left\{|{\bf \hat x}^k|\right\}$,\\
\hspace{5mm} vi) Update the residual for next round:\\
\hspace{10mm} $\mathbf{r}^k = \mathbf{y}-\mathbf{\Phi}_{\mid \mathcal{G}} \mathbf{\hat x}^k_{\mid \mathcal{G}}$.\\
\hline\hline
\end{tabular}
\label{table3}\vspace{-0.2cm}
\end{table}

\vspace{-0.2cm}
\subsection{MAP-SP Algorithm}
 SP is a two-step iterative algorithm for sparse recovery \cite{Dai2009_SP}. Similar to CoSaMP, the SP algorithm identifies the current estimate of support set by greedily adding multiple indices with the largest correlation in magnitude.The main difference between CoSaMP and SP lies in the second step. While CoSaMP applies a pruning technique using the estimated sparse signal in the first stage to maintain the required sparsity level without performing the second least-square estimation. Whereas, the SP algorithm updates the sparse solution by solving a least square problem based on the reduced support in the second stage.

\begin{table}
\center\vspace{-0.2cm}
\caption{MAP-SP Algorithm}\vspace{-0.2cm}
\begin{tabular}{|l|}
\hline\hline
1)  Initialization:\\
\hspace{5mm} $k:=0$, $\mathbf{\hat x}^{0} = {\bf 0}$ \\
\hspace{5mm} ${\bf r}^0:={\bf y}$ (the current residual)\\
\hspace{5mm}  $\mathcal{S}^0:=\{\emptyset\}$ and $\Omega^0:=\{\emptyset\}$\\
\hline
2) Repeat until a stopping criterion is met\\
\hspace{5mm} i) Compute the current proxy: \\
\hspace{10mm} $z^k_n = \frac{{\bf a}^T_n\mathbf{r}^{k-1}}{\|{\bf a}_n\|_2}$ for $n\in[1:N]$.\\
\hspace{5mm} iii) Select the $K$ largest indices of MAP ratio:\\
\hspace{10mm} $\Omega^k=: \arg\max_{|\Omega^k|=K} \left\{\Lambda_{d}(z_{n}^k)\right\}$ for $d\in\{U,C,G\}$.\\
\hspace{5mm} iv) Merge the support set:\\
\hspace{10mm} $\mathcal{S}^k=\mathcal{S}^{k-1}\cup \Omega^k$.\\
\hspace{5mm} iv) Perform a least-squares signal estimation:\\ 
\hspace{10mm} ${\bf b}^k:=\arg \min_{{\bf b}}\|{\bf \Phi}_{\mid\mathcal{S}^k}{\bf b}-{\bf y}\|_2$\\
\hspace{5mm} v) Select the $K$ largest index in $\mathbf{\hat x}^k$:\\
\hspace{10mm}  ${\mathcal G}=: \arg\max_{|{\mathcal G}|=K} \left\{|{\bf b}^k|\right\}$\\
\hspace{5mm} vi) Perform a least-squares signal estimation using the updated ${\mathcal G}$:\\ 
\hspace{10mm} ${\bf \hat x}^k_{\mid \mathcal{G}}:=\arg \min_{{\bf x}}\|{\bf \Phi}_{\mid\mathcal{G}}{\bf x}-{\bf y}\|_2$.\\
\hspace{5mm} vii) Update the residual for next round:\\
\hspace{10mm} $\mathbf{r}^k = \mathbf{y}-\mathbf{\Phi}_{\mid \mathcal{G}} \mathbf{\hat x}^k_{\mid \mathcal{G}}$.\\
\hline\hline
\end{tabular}\label{table4}\vspace{-0.2cm}
\end{table}

Applying the proposed MAP support detection method, we modify this algorithm by changing the support set identification stage. The proposed MAP-SP algorithm selects $2K$ support indices with the largest MAP ratio values in each iteration. The modified algorithm is summarized in Table \ref{table4}. Since the log-MAP ratio computation does not increase the computational complexity order, the proposed algorithm can be implemented with $\mathcal{O}(MNK)$, which is comparable to that of the SP algorithm in \cite{Dai2009_SP}.

\section{Numerical Results}

We provide empirical recovery performance of the proposed algorithms by means of simulations. We evaluate the empirical frequency (cumulative density function) of exact reconstruction for the proposed algorithms in both noise and noiseless cases and compare them with the conventional algorithms. In our simulation, we generate $M\times N$ ($M=128$ and $N=256$) sensing matrix whose elements are drawn from IID Gaussian distribution $\mathcal{N}(0, \frac{1}{M})$. Furthermore, we consider $K$-sparse vector ${\bf x}$ whose support is uniformly distributed. Each non-zero element of ${\bf x}$ is one for the binary signal and is randomly selected from $[0,1]$ for the uniform signal. To obtain the empirical frequency of exact reconstruction, we perform 1,000 independent trials for each algorithm. For each trial, we perform iterations until the stopping criterion $\|{\bf x}-{\bf \hat{x}}\|_2^2\leq 10^{-12}$ is satisfied except for gOMP and MAP-gOMP. For gOMP and MAP-gOMP, we perform $\min\left(K,\left \lfloor{\frac{M}{L}}\right \rfloor \right)$ number of iterations in each trial, where $L=2$. To obtain the performance of BP, we use the CVX tool that can be executed in \textsf{MATLAB} \cite{cvx}. 

\begin{figure}
\centering \label{Fig1}\vspace{-0.2cm}
\includegraphics[width=3.75in]{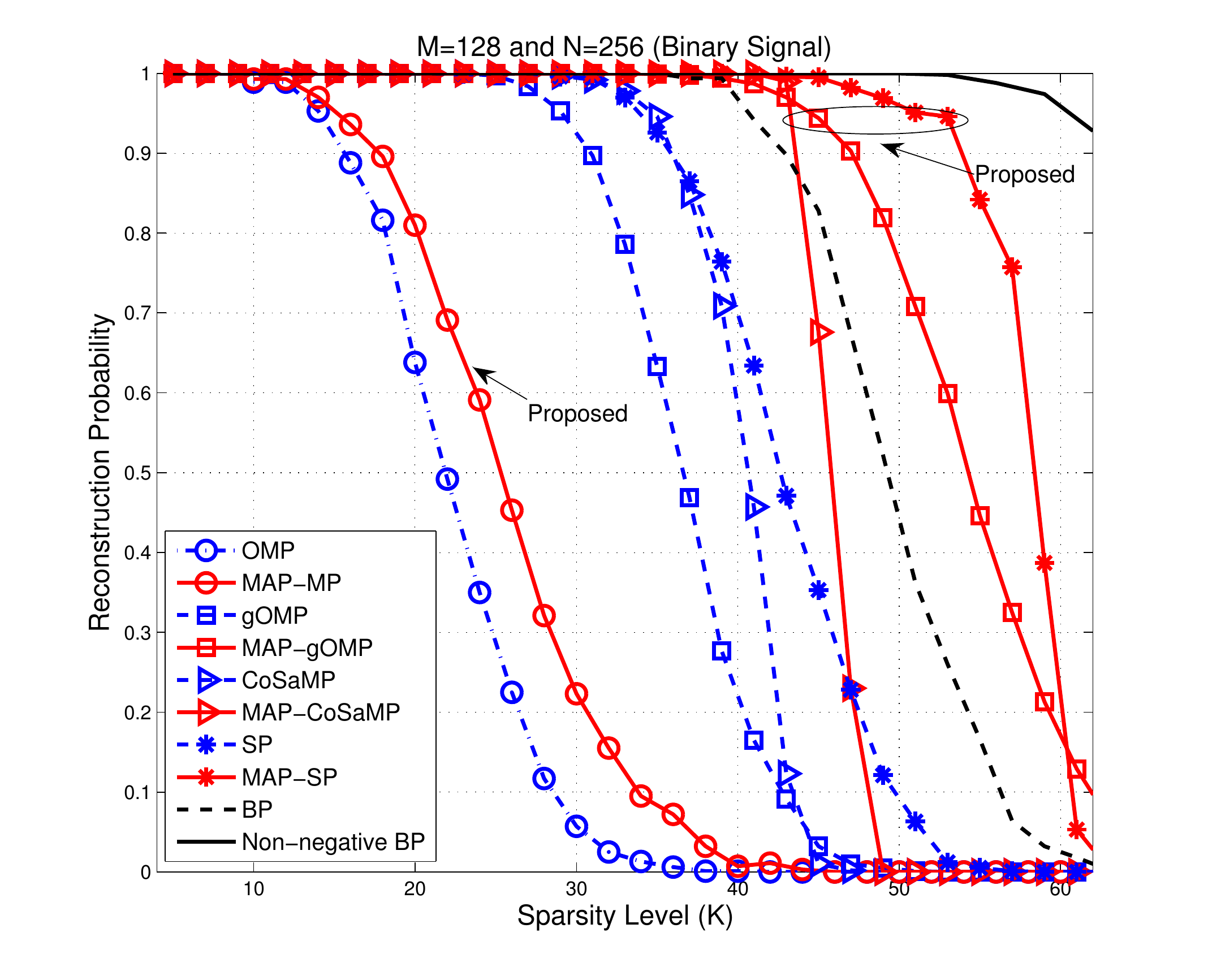} \vspace{-0.8cm}
\caption{Performance comparison of perfect reconstruction probability for the binary signal with noise-free measurements.} \label{fig:1} \vspace{-0.3cm}
\end{figure}

\begin{figure}
\centering \label{Fig2}
\includegraphics[width=3.75in]{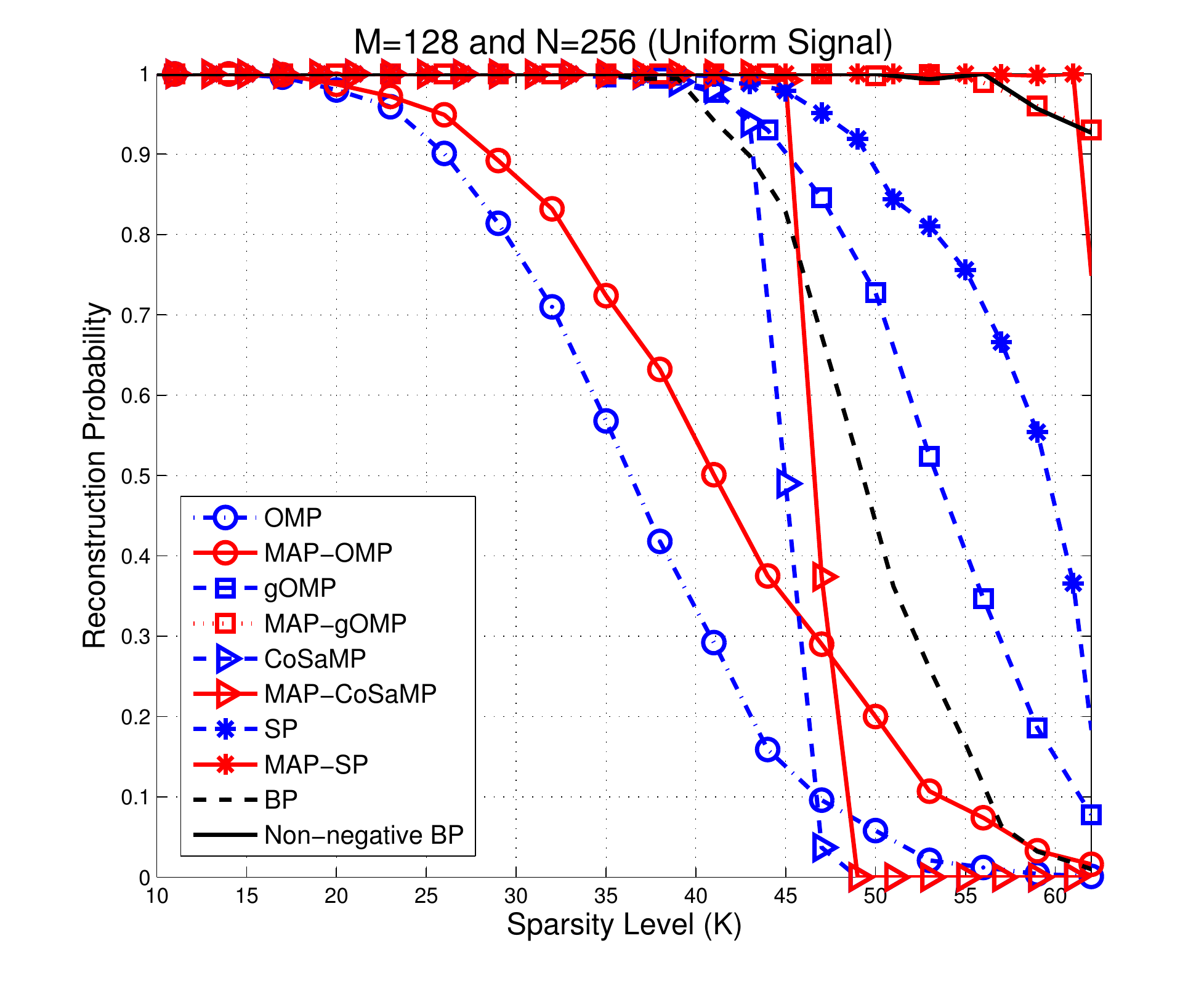} \vspace{-0.8cm}
\caption{Performance comparison of perfect reconstruction probability for the signal whose non-zero element is uniformly distributed between 0 and 1, i.e., $x_i\sim {\rm Uni}[0,1]$ with noise-free measurements.} \label{fig:2} \vspace{-0.3cm}
\end{figure}

Fig. \ref{fig:1} illustrates the reconstruction probability  performance of a binary sparse signal with noise-free measurements as a function of the sparsity level $K$ of the signal. The simulation results reveal that the proposed algorithms improve the reconstruction probability performance significantly compared to those of the existing algorithms. For example, the proposed MAP-gOMP recovers the binary sparse signal with more than 90 $\%$ probability up to a sparsity level of $42$. Whereas, the conventional gOMP is able to reconstruct the signal only up to a sparsity level $31$ under the same reconstruction probability constraint. Furthermore, the proposed MAP-gOMP, MAP-CoSaMP, and MAP-SP algorithms outperform  BP, i.e., a linear programing technique, for the binary signal reconstruction. A non-negative BP algorithm that solves the $\ell_1$-minimization problem with an additional non-negative constraint in ${\bf x}$, however, provides a better performance than the proposed algorithms at the expense of a more computational complexity.

Fig. \ref{fig:2} shows the reconstruction probability of a sparse signal whose non-zero element is uniformly distributed between 0 and 1, i.e., $x_i\sim {\rm Uni}[0,1]$. We use the MAP ratio function in \eqref{eq:UniformSignal}  for the simulations. Similar to the binary signal case, it is no wonder that the proposed MAP-gOMP, MAP-CoSaMP, and MAP-SP algorithms outperform than the existing sparse recovery algorithms by considerably reducing the mis-detection probability of supports. In particular, MAP-gOMP and MAP-SP are able to recover the signal with more than 95 $\%$ probability up to a sparsity level of $60$, which is close to the maximum sparsity level ($\frac{M}{2}=64$) that can be recovered with a unique solution guarantee. In particular, MAP-SP outperforms than the non-negative BP algorithm. 

We consider now a binary sparse image recovery example. As illustrated in Fig. \ref{fig:3} (the left-top figure), a binary sparse image with 37$\times$37-pixel size is considered for the experiment. Applying linear random projection matrix ${\bf \Phi} \in{\mathbb R}^{685\times 1369}$ whose elements are drawn from $\mathcal{N}(0, \frac{1}{685})$, we compress the binary image. As shown in Fig. \ref{fig:3}, when the noise-free measurements are used for image (supports) reconstruction, we observe that the proposed MAP-gOMP and MAP-SP algorithms for the support recovery outperform than the other existing algorithms, which agrees with the result shown in Fig. \ref{fig:1}. We add Gaussian noise with zero mean and variance $\sigma_w^2=0.005$. In this case, as depicted in Fig. \ref{fig:4}, the proposed MAP-SP method is able to recover the image almost perfectly even in the presence of noise. Whereas, the image reconstruction performance of the proposed MAP-gOMP algorithm is degraded compared  to the case of noise-free, which exhibits the noise sensitivity of the algorithm.

As can be seen in Table \ref{Table7}, the proposed algorithms achieve significant speedup compared to the existing algorithms in both the noise-free and noisy measurements cases. These speedup gains are mainly due to the fact that the proposed algorithms identify the true support set with small number of iterations, leading to the faster convergence rates than those of the existing algorithms. In particular, the runtimes of MAP-SP ($\simeq$ 0.21 sec) under the noise-free measurements speed up $157$ times than that of BP ($\simeq 33.22$ sec).

To provide the insight on how the performance of the proposed algorithm decreases as the noise variance increases for  given $K$, $M$, and $N$, we plot normalized mean squared error (NMSE) of the proposed algorithms as a function of signal-to-noise ratio ${\rm SNR}=\frac{\|{\bf \Phi}{\bf x}\|_2^2}{\sigma_w^2}$, which is defined as
\begin{align}
{\rm NMSE}=10\log_{10}\left(\frac{1}{T}\sum_{i=1}^T\frac{\|{\bf \hat x}^i-{\bf x}^i\|_2^2}{\|{\bf x}^i\|_2^2}\right),
\end{align}
where $T$ is the number of trails and the subscript $i$ represents the trial number. In each random trial, a random gaussian matrix ${\bf \Phi}\in\mathbb{R}^{128\times 256}$ is generated, and the non-zero elements in ${\bf x}$ are generated as Gaussian random variables with mean one and variance $\frac{1}{M^2}=\frac{1}{128^2}$. We assume that sparsity level $K=40$. For this noise case, we slightly modify the least-square signal estimator used in each algorithm such that
 ${\bf \hat x}^k_{\mathcal{S}^k}=\left({\bf \Phi}_{|\mathcal{S}^k}^T{\bf \Phi}_{|\mathcal{S}^k}+\frac{1}{{\rm SNR}}{\bf I} \right)^{-1}{\bf \Phi}_{|\mathcal{S}^k}{\bf y}$. 

As illustrated in Fig. \ref{fig:5}, the algorithms using the proposed MAP support detection method outperform the conventional sparse recovery algorithms. This reveals that the proposed algorithms are robust to measurement noise. Interestingly, the proposed algorithms including MAP-gOMP ($L=2)$ and MAP-SP exhibit a better NMSE performance compared to that of FBMP in \cite{Schniter2009}.

\begin{figure}
\centering \vspace{-0.3cm}
\includegraphics[width=3.5in]{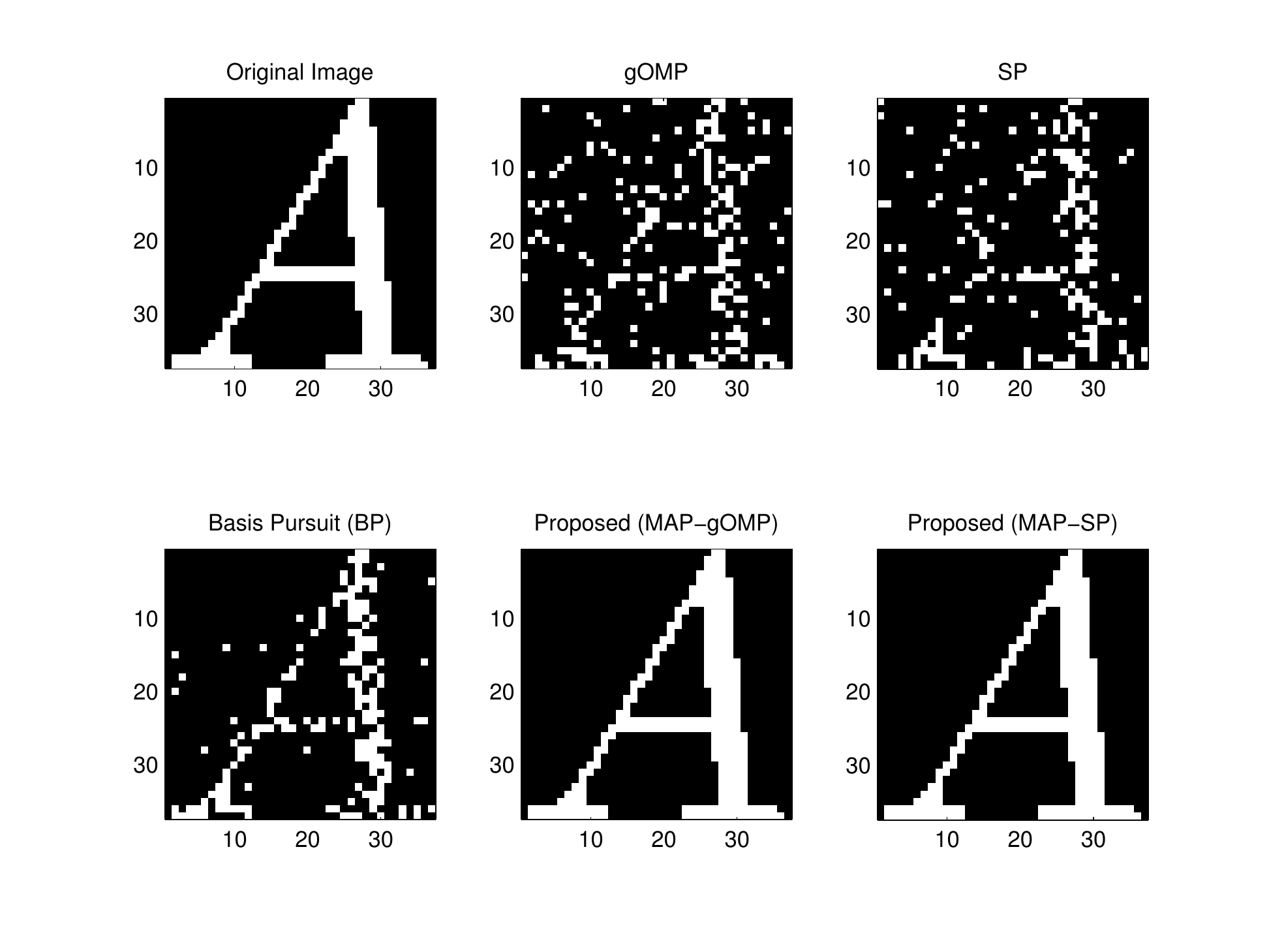} \vspace{-0.9cm}\caption{Support recovery performance comparison of the sparse binary image reconstruction with the compressed and noise-free measurements, i.e., $\sigma_w^2=0$. All different algorithms use the same random linear projection matrix for image reconstruction.} \label{fig:3} \vspace{-0.3cm}
\end{figure}

\section{Conclusion}
We have presented a new support detection technique based on a MAP criterion for greedy sparse signal recovery algorithms. Using this method, we have proposed a set of greedy sparse signal recovery algorithms and established a theoretical signal recovery guarantee for a particular case. One major implication is that the joint use the distributions of sensing matrix, sparse signal, and noise in support identification offers a tremendous recovery performance improvement over previous support detection approaches that ignore such statistical information. Our numerical results demonstrate that the greedy algorithms with highly reliable support detection provide significantly better sparse recovery performance than the linear programming approach. 

An interesting direction for future study would be to explore the statistical guarantees of the proposed MAP-gOMP, MAP-CoSaMP, and MAP-SP. Another possible research direction is to investigate the greedy algorithms when different statistical distributions of the sensing matrix are used. Furthermore, it would be interesting to apply the proposed support detection principle to improve the sparse signal reconstruction method in \cite{Wang_Yin}.
%

\begin{figure}
\centering \vspace{-0.2cm}
\includegraphics[width=3.6
in]{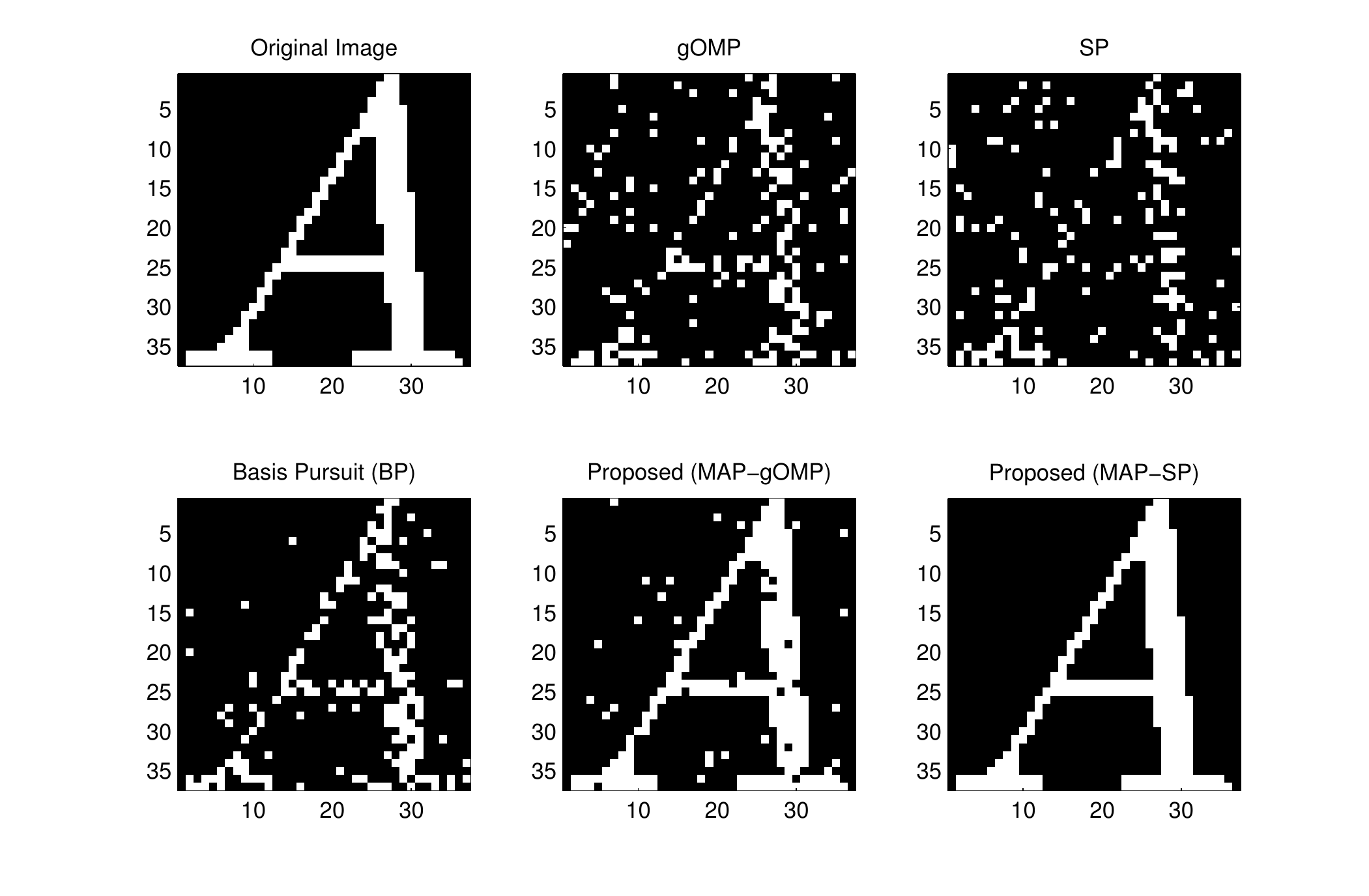} \vspace{-0.7cm}
\caption{Support recovery performance comparison of the sparse binary image reconstruction with the compressed and noisy measurements, where $\sigma_w^2=0.005$, equivalently  ${\tilde \sigma}_w^2=M\sigma_w^2= 3.425$.} \label{fig:4}  \vspace{-0.2cm}
\end{figure}

\begin{table}
\begin{center}\vspace{-0.2cm}
  \caption{Algorithm Runtimes } \vspace{-0.4cm}
\begin{tabular}{c|c|c||c|c} \hline
     &\!\!\!\!\!  Runtimes (Sec) \!\!\!\!\!\! & Speedup & Runtimes   & Speedup   \\
Algorithms &  $\sigma_w^2=0$  & $\sigma_w^2=0$ & $\sigma_w^2=0.005$   & $\sigma_w^2=0.005$   \\ \hline\hline
   gOMP  &  5.03 &  6.6x  & 5.03 &  7.1x\\  
   MAP-gOMP  & 2.26 &  14.6x& 4.81 &  7.4x \\ 
   SP & 12.97 &   2.3x  & 14.5 &   2.5x\\  
   MAP-SP  & 0.21 & 157.8x &  15.1 & 2.4x\\  
   BP  & 33.22 &  baseline & 35.73 &  baseline\\  
\hline
\end{tabular}\label{Table7} \vspace{-0.4cm}
\end{center}
\end{table}

\begin{figure}
\centering \vspace{-0.2cm}
\includegraphics[width=3.8
in]{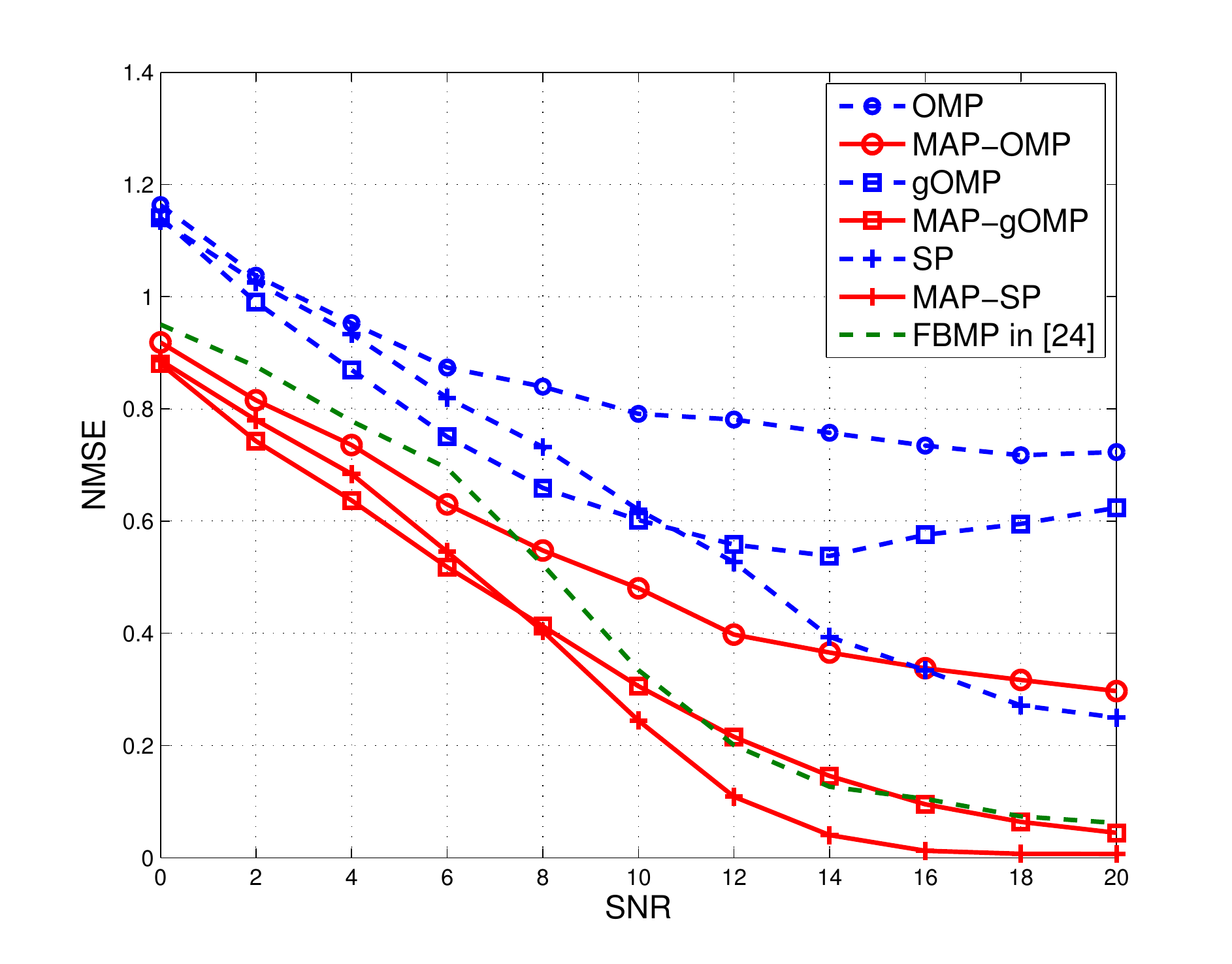} \vspace{-0.7cm}
\caption{The NMSE performance comparison among different sparse signal recovery algorithms when $T=1000$. For FBMP, we use $D = 20$, which is the maximum number of allowable repeated greedy searches.} \label{fig:5}  \vspace{-0.2cm}
\end{figure} 
 
 
\appendix
\section{Appendices}

\vspace{-0.1cm}
\subsection{Proof of Lemma \ref{lemma1}}\label{proof:lemma1} \vspace{-0.1cm}
 
Note that the distribution of each atom vector ${\bf a}_n$ is rotationally invariant. This implies that for any unitary matrix ${\bf U}\in\mathbb{R}^{M\times M}$, the distributions of ${\bf U}{\bf a}_n$ and ${\bf a}_n$ are identical. By selecting a unitary matrix ${\bf U}$ so that ${\bf U}{\bf a}_n=[1, 0, \ldots, 0]^T$, we can compute the cumulative distribution function of $\frac{{\bf a}_n^T{\bf a}_{\ell}}{\|{\bf a}_n\|_2}$ as
\begin{align}
\mathbb{P}\left[\frac{{\bf a}_n^T{\bf a}_{\ell}}{\|{\bf a}_n\|_2} \leq x\right] 
&=\mathbb{P}\left[\frac{{\bf a}_n^T}{\|{\bf a}_n\|_2}{\bf U}^T{\bf a}_{\ell}  \leq x\right] 
\nonumber \\
&=\mathbb{P}\left[{\bf a}_{\ell}(1)  \leq x\right]
\end{align}
where ${\bf a}_{\ell}(1)$ denotes the first component of ${\bf a}_{\ell}$. As a result, $\frac{{\bf a}_n^T{\bf a}_{\ell}}{\|{\bf a}_n\|_2}$ is IID Gaussian with zero mean and variance $\frac{1}{M}$.

 \vspace{-0.1cm}
\subsection{Proof of Lemma \ref{lemma2}}\label{proof:lemma2} \vspace{-0.1cm}

Recall that all elements of ${\bf a}_n$ are Gaussian random variables with zero mean and variance $\frac{1}{M}$, and they are mutually independent. Thus,
\begin{align}
\mathbb{P}\left[\|{\bf a}_n\|_2 \leq x \right]&=\mathbb{P}\left[\sqrt{\sum_{m=1}^M \left({\bf a}_n(m)\right)^2}  \leq x\right] \nonumber \\
&= \mathbb{P}\left[\sqrt{\sum_{m=1}^M \frac{\left({\bf a}_n(m)\right)^2}{\frac{1}{M}}}  \leq \sqrt{M}x\right]
\nonumber \\
&\stackrel{(a)}{=}\frac{\gamma\left(\frac{M}{2},\frac{Mx^2}{2}\right)}{\Gamma\left(\frac{M}{2}\right)},
\end{align}
where (a) follows from the fact that $\sqrt{\sum_{m=1}^M \frac{\left({\bf a}_n(m)\right)^2}{1/M}} $ is Chi-distributed with $M$ degrees of freedom, since $\frac{\left({\bf a}_n(m)\right)^2}{1/M}$ is a normal Gaussian with zero mean and unit variance, and $\gamma(s,x)=\int_0^{x}t^{s-1}e^{-t} {\rm d}t$ denotes the lower incomplete gamma function. By taking the derivative with respect to $x$, we obtain the distribution of $\|{\bf a}_n\|_2$ as
\begin{align}
f_{\|{\bf a}_n\|_2}(x)&=\frac{1}{{\rm d}x}\frac{\gamma\left(\frac{M}{2},\frac{Mx^2}{2}\right)}{\Gamma\left(\frac{M}{2}\right)} \nonumber \\ 
&=\frac{2^{1-\frac{M}{2}}M^{\frac{M}{2}}x^{M-1}e^{-\frac{Mx^2}{2}}}{\Gamma\left(\frac{M}{2}\right)}.
\end{align}
Accordingly, the mean of the norm is 
\begin{align}
\mathbb{E}\left[\|{\bf a}_n\|_2\right]&=\int_0^{\infty}\frac{2^{1-\frac{M}{2}}M^{\frac{M}{2}}x^{M}e^{-\frac{Mx^2}{2}}}{\Gamma\left(\frac{M}{2}\right)}  {\rm d} x\nonumber \\&=\sqrt{\frac{2}{M}}\frac{\Gamma\left(\frac{1+M}{2}\right)}{\Gamma\left(\frac{M}{2}\right)},
\end{align}
which completes the proof.

 \vspace{-0.2cm}
\subsection{Proof of Lemma \ref{lemma3}}\label{proof:lemma3}\vspace{-0.1cm}

We commence by computing the probability that the absolute difference between the norm and its average is greater than or equal to a small value $\epsilon$, which is 
\begin{align}
&\mathbb{P}\left[ \left| \|{\bf a}_n\|_2 - \mathbb{E}\left[\|{\bf a}_n\|_2\right] \right|  \geq \epsilon \right] \nonumber \\&=\mathbb{P}\left[ \left| \sqrt{\sum_{m=1}^M \left({\bf a}_n(m)\right)^2}  - \sqrt{\frac{2}{M}}\frac{\Gamma\left(\frac{1+M}{2}\right)}{\Gamma\left(\frac{M}{2}\right)}\right| \geq \epsilon \right] \nonumber \\
&\leq \frac{\mathbb{E}\left[\sum_{m=1}^M \left({\bf a}_n(m)\right)^2\right]-  \frac{2}{M}\left(\frac{\Gamma\left(\frac{1+M}{2}\right)}{\Gamma\left(\frac{M}{2}\right)}\right)^2}{\epsilon}
\nonumber \\
&\leq \frac{\sum_{m=1}^M \mathbb{E}\left[ \left( {\bf a}_n(m)\right)^2\right]-  \frac{2}{M}\left(\frac{\Gamma\left(\frac{1+M}{2}\right)}{\Gamma\left(\frac{M}{2}\right)}\right)^2}{\epsilon}
\nonumber \\
&= \frac{1-  \frac{2}{M}\left(\frac{\Gamma\left(\frac{1+M}{2}\right)}{\Gamma\left(\frac{M}{2}\right)}\right)^2}{\epsilon},
\end{align}
where the inequality follows from Chebyshev's inequality. Since $ \frac{2}{M}\left(\frac{\Gamma\left(\frac{1+M}{2}\right)}{\Gamma\left(\frac{M}{2}\right)}\right)^2$ converges to one as $M$ goes to infinity, we conclude that
\begin{align}
\lim_{M\rightarrow \infty}\mathbb{P}\left[ \left| \|{\bf a}_n\|_2 - \mathbb{E}\left[\|{\bf a}_n\|_2\right] \right|  \geq \epsilon \right] =0
\end{align}
for some $\epsilon >0$. As a result, the norm of each column vector concentrates to the average $\mathbb{E}\left[\|{\bf a}_n\|_2\right]=\sqrt{\frac{2}{M}}\frac{\Gamma\left(\frac{1+M}{2}\right)}{\Gamma\left(\frac{M}{2}\right)}$ and it also converges to one for a large enough $M$ because $\lim_{M\rightarrow \infty}\sqrt{\frac{2}{M}}\frac{\Gamma\left(\frac{1+M}{2}\right)}{\Gamma\left(\frac{M}{2}\right)}=1$. This completes the proof.

\vspace{-0.1cm}
\subsection{Proof of Lemma \ref{lemma4}}\label{proof:lemma4}\vspace{-0.1cm}
 
Let ${\bf P}_{|\mathcal{S}^{k}}=\left({\bf \Phi}_{|\mathcal{S}^{k}}^T{\bf \Phi}_{|\mathcal{S}^{k}}\right)^{-1}{\bf \Phi}_{|\mathcal{S}^{k}}^T$ be a projection matrix to estimate ${\bf x}_{|\mathcal{S}^{k}}$ in the $k$th iteration. Using this, the corresponding non-zero elements are obtained as
\begin{align}
{\hat {\bf x}}_{|\mathcal{S}^{k}}&={\bf P}_{|\mathcal{S}^{k}}{\bf y} \nonumber\\
&={\bf x}_{|\mathcal{S}^{k}}+{\bf P}_{|\mathcal{S}^{k}}\left({\bf \Phi}_{|\mathcal{T}\setminus\mathcal{S}^{k}}{\bf x}_{|\mathcal{T}\setminus\mathcal{S}^{k}} +{\bf w}\right).
\end{align}
Then, the mean of the estimation error is
\begin{align}
\mathbb{E}\left[{\hat {\bf x}}_{|\mathcal{S}^{k}}-{ {\bf x}}_{|\mathcal{S}^{k}}\right]&=\mathbb{E}\left[{\bf P}_{|\mathcal{S}^{k}}{\bf \Phi}_{|\mathcal{T}\setminus\mathcal{S}^{k}}{\bf x}_{|\mathcal{T}\setminus\mathcal{S}^{k}}\right] +\mathbb{E}\left[{\bf P}_{|\mathcal{S}^{k}}{\bf w}\right] \nonumber\\
&={\bf 0},
\end{align}
where the last equality follows from that all elements in ${\bf P}_{|\mathcal{S}^{k}}$, ${\bf \Phi}_{|\mathcal{T}\setminus\mathcal{S}^{k}}$, ${\bf x}_{|\mathcal{T}\setminus\mathcal{S}^{k}}$, and ${\bf w}$ are mutually independent and $\mathbb{E}[{\bf \Phi}_{|\mathcal{T}\setminus\mathcal{S}^{k}}]={\bf 0}$ and $\mathbb{E}[{\bf w}]={\bf 0}$. Next we compute the error covariance matrix. Conditioned that the sub-matrix ${\bf \Phi}_{|\mathcal{S}^{k}}$ is fixed, the error covariance matrix is  
\begin{align}
&\mathbb{E}\!\left[\!\left(\!{ {\bf x}}_{|\mathcal{S}^{k}}-{\hat {\bf x}}_{|\mathcal{S}^{k}}\!\right)\!\!\left(\!{ {\bf x}}_{|\mathcal{S}^{k}}-{\hat {\bf x}}_{|\mathcal{S}^{k}}\!\right)^T\!\mid {\bf \Phi}_{|\mathcal{S}^{k}}\right]\\ \nonumber 
&=\!{\bf P}_{\!|\mathcal{S}^{k}}{\bf \Phi}_{|\mathcal{T}\!\setminus\!\mathcal{S}^{k}}\mathbb{E}\!\left[\!{\bf x}_{|\mathcal{T}\setminus\mathcal{S}^{k}}{\bf x}_{|\mathcal{T}\setminus\!\mathcal{S}^{k}}^T\!\right]\!{\bf \Phi}_{\!|\mathcal{T}\!\setminus\!\mathcal{S}^{k}}^T{\bf P}_{|\mathcal{S}^{k}}^T \!+\! {\bf P}_{\!|\mathcal{S}^{k}}\mathbb{E}[\!{\bf w}\!{\bf w}^T]{\bf P}_{\!|\mathcal{S}^{k}}^T \\ \nonumber 
&\stackrel{(a)}{=}\!\sigma^2_x{\bf P}_{\!|\mathcal{S}^{k}}\mathbb{E}\left[{\bf \Phi}_{|\mathcal{T}\setminus\mathcal{S}^{k}} {\bf \Phi}_{\!|\mathcal{T}\setminus\mathcal{S}^{k}}^T\right] {\bf P}_{|\mathcal{S}^{k}}^T \!+\! \sigma^2_w{\bf P}_{\!|\mathcal{S}^{k}}{\bf P}_{\!|\mathcal{S}^{k}}^T \\ \nonumber 
&\stackrel{(b)}{=}\left(\frac{\sigma^2_x(K-k)}{M}+\sigma^2_w\right)\left({\bf \Phi}_{|\mathcal{S}^{k}}^T{\bf \Phi}_{|\mathcal{S}^{k}}\right)^{-1} 
\end{align}
where (a) is due to $\mathbb{E}[{\bf w}\!{\bf w}^T]=\sigma^2_w{\bf I}$ and $\mathbb{E}\!\left[\!{\bf x}_{|\mathcal{T}\setminus\mathcal{S}^{k}}{\bf x}_{|\mathcal{T}\setminus\!\mathcal{S}^{k}}^T\!\right]=\sigma^2_x{\bf I}$ and (b) follows from $\mathbb{E}\left[{\bf \Phi}_{|\mathcal{T}\setminus\mathcal{S}^{k}} {\bf \Phi}_{\!|\mathcal{T}\setminus\mathcal{S}^{k}}^T\right]=\frac{K-k}{M}{\bf I}$ and ${\bf P}_{\!|\mathcal{S}^{k}}{\bf P}_{\!|\mathcal{S}^{k}}^T\!=\!\left(\!{\bf \Phi}_{|\mathcal{S}^{k}}^T{\bf \Phi}_{|\mathcal{S}^{k}}\!\!\right)^{\!-1}$. 
Let ${\bf b}_i$ and ${\bf \Phi}_{|\mathcal{S}^{k}_i}$ be the $i$th column vector in ${\bf \Phi}_{|\mathcal{S}^{k}}$ and a submatrix obtained by eliminating ${\bf b}_i$ in ${\bf \Phi}_{|\mathcal{S}^{k}}$ where $i\in\mathcal{S}^{k}$. The $i$th diagonal element of $\left({\bf \Phi}_{|\mathcal{S}^{k}}^T{\bf \Phi}_{|\mathcal{S}^{k}}\!\!\right)^{\!-1}$ is given by
\begin{align}
\left[\left({\bf \Phi}_{|\mathcal{S}^{k}}^T{\bf \Phi}_{|\mathcal{S}^{k}}\!\right)^{\!-1}\right]_{i,i} =\frac{1}{{\bf b}_i^T{\bf P}^{\perp}_{{\bf \Phi}_{|\mathcal{S}^{k}_i}}{\bf b}_i}
\end{align} 
where ${\bf P}^{\perp}_{{\bf \Phi}_{|\mathcal{S}^{k}_i}}={\bf I}-{\bf \Phi}_{|\mathcal{S}^{k}_i}\!\!\left(\!{\bf \Phi}_{|\mathcal{S}^{k}_i}^T{\bf \Phi}_{|\mathcal{S}^{k}_i}\!\!\right)^{\!-1}\!\!{\bf \Phi}_{|\mathcal{S}^{k}_i}^T$ stands for the orthogonal projection onto the null space of ${\bf \Phi}_{|\mathcal{S}^{k}_i}$. Since all elements in ${\bf b}_i$ and ${\bf \Phi}_{|\mathcal{S}^{k}_i}$ are assumed to be IID Gaussian random variables $\mathcal{N}\left(0,\frac{1}{M}\right)$, $M{\bf b}_i^T{\bf P}^{\perp}_{{\bf \Phi}_{|\mathcal{S}^{k}_i}}{\bf b}_i$ is distributed as a Chi-squared random variable with degrees of freedom $M-k$, i.e., $M{\bf b}_i^T{\bf P}^{\perp}_{{\bf \Phi}_{|\mathcal{S}^{k}_i}}{\bf b}_i\sim \chi^2_{(M-k)}$. As a result, by marginalizing with respect to the Chi-squared distribution, we have the variance of the $i$th estimation error as
\begin{align}
\mathbb{E}[({\hat x}_i-{x}_i)^2]&=\left(\frac{\sigma^2_x(K-k)}{M}+\frac{\tilde\sigma^2_w}{M}\right)\mathbb{E}\left[\frac{1}{{\bf b}_i^T{\bf P}^{\perp}_{{\bf \Phi}_{|\mathcal{S}^{k}_i}}{\bf b}_i}
 \right] \nonumber \\
&=\left(\frac{\sigma^2_x(K-k)}{M}+\frac{\tilde\sigma^2_w}{M}\right) \frac{M}{M-k-2}\nonumber \\
&=\frac{\sigma^2_x(K-k)+\tilde\sigma^2_w}{M-k-2},
\end{align}
	which completes the proof.


\begin{thebibliography}{1}
%



\bibitem{CandesTao2005}
E. J. Cand$\grave{\rm e}$s and T. Tao, \newblock ``Decoding by linear programming,''
\newblock {\em IEEE Trans. Inf. Theory,} vol. 51, no. 12, pp. 4203 - 4215, Dec. 2005.

\bibitem{CandesRombergTao2006}
E. J. Cand$\grave{\rm e}$s, J. Romberg, and T. Tao, \newblock ``Robust uncertainty principles: exact signal reconstruction from highly incomplete frequency information,''
\newblock {\em IEEE Trans. Inf. Theory,} vol. 52, no. 2, pp. 489-509, Feb. 2006.


\bibitem{CandesWakin_Mag}
E. J. Cand$\grave{\rm e}$s and M. B. Wakin, \newblock ``An introduction to compressive sampling,''
\newblock {\em IEEE Signal Proc. Mag.,} vol. 25, pp.
21-30, March 2008.


\bibitem{Eldar_book}
Y. C. Eldar and G. Kutyniok, \newblock ``Compressed sensing : theory and applications,''
\newblock {\em Cambridge Univ. Press,} 2012.

\bibitem{Garey_book}
M. R. Garey and D. S. Johnson, \newblock ``Computers and intractability: a guide to the theory of NP-completeness,''
\newblock {\em W. H. Freeman,} 1979.

 

\bibitem{CandesRomberg2007}
E. J. Cand$\grave{\rm e}$s and J. Romberg,
\newblock ``Sparsity and incoherence in compressive sampling,''
\newblock {\em Inverse problems,} vol. 23, pp. 969, Apr. 2007.


%


\bibitem{Candes2008}
E. J. Cand$\grave{\rm e}$s,
\newblock ``The restricted isometry property and its implications for compressed sensing,''
\newblock {\em Comptes Rendus Mathematique,} vol. 346, no. 9-10, pp. 589-592, Feb. 2008.


\bibitem{Chen_BP}
S. S. Chen, \newblock ``Basis pursuit,''
\newblock {\em Ph.D. dissertation,} Stanford Univ., Stanford, CA, Nov. 1995.

\bibitem{Nesterov}
Y. Nesterov and A. Nemirovskii,\newblock ``Interior-point polynomial algorithms in convex programming.,''
\newblock {\em SIAM,} 1994. 


\bibitem{TroppGilbert2007}
J. A. Tropp and A. C. Gilbert, \newblock ``Signal recovery from random measurements via orthogonal matching pursuit,'' \newblock {\em IEEE Trans. Inf. Theory,}  vol. 53, no. 12, pp. 4655-4666, Dec. 2007.

   

\bibitem{DavenportWakin2010}
M. A. Davenport and M. B. Wakin, \newblock ``Analysis of orthogonal matching pursuit using the restricted isometry property,'' \newblock {\em IEEE Trans. Inf. Theory,} vol. 56, no. 9, pp. 4395-4401, Sept. 2010.

\bibitem{CaiWang2011}
T. T. Cai and L. Wang, \newblock ``Orthogonal matching pursuit for sparse signal recovery with noise,''
\newblock {\em IEEE Trans. Inf. Theory,} vol. 57, no. 7, pp. 4680-4688, July 2011.

\bibitem{Zhang}
T. Zhang, \newblock ``Sparse recovery with orthogonal matching pursuit under RIP,''
\newblock {\em IEEE Trans. Inf. Theory,} vol. 57, no. 9, pp. 6215-6221, Sept. 2011.

\bibitem{LiuTemlyakov2012}
E. Liu and V. N. Temlyakov, \newblock ``The orthogonal super greedy algorithm and applications in compressed sensing,'' \newblock {\em IEEE Trans. Inf. Theory,} vol. 58, no. 4, pp. 2040-2047, Apr. 2012.

\bibitem{Donoho}
D. L. Donoho, Y. Tsaig, I. Drori, and J. L. Starck, \newblock ``Sparse solution of underdetermined systems of linear equations by stagewise orthogonal matching pursuit,''
\newblock {\em IEEE Trans. Inf. Theory,} vol. 58, no. 2, pp. 1094-1121, Feb. 2012.


\bibitem{IHT2009}
T. Blumensath and M. E. Davies, \newblock ``Iterative hard thresholding for compressed sensing,''
\newblock {\em Applied and Computational Harmonic Analysis,} vol. 27, no. 3, pp. 265-274, Nov. 2009.



\bibitem{gOMP}
J. Wang, S. Kwon, and B. Shim, \newblock ``Generalized orthogonal matching pursuit,''
\newblock {\em IEEE Trans. Signal Process.,} vol. 60, no. 12, pp. 6202-6216, Dec. 2012.

\bibitem{Needell2010_CoSaMP}
D. Needell and J. A. Tropp, \newblock ``CoSaMP: iterative signal recovery from incomplete and inaccurate samples,''
\newblock {\em Commun. ACM,} vol. 53, no. 12,
pp. 93-100, Dec. 2010.

\bibitem{Dai2009_SP}
W. Dai and O. Milenkovic, \newblock ``Subspace pursuit for compressive sensing signal reconstruction,''
\newblock {\em IEEE Trans. Inf. Theory,} vol. 55, no. 5, pp. 2230-2249, May 2009.





\bibitem{Verdu}
S. Verdu, \newblock ``Multiuser detection,''
\newblock {\em Cambridge University Press,} 1998.


\bibitem{Tropp2004}
J. A. Tropp, \newblock ``Greed is good: Algorithmic results for sparse approximation,''
\newblock {\em IEEE Trans. Inf. Theory,} vol. 50, no. 10, pp. 2231-2242, Oct. 2004.

\bibitem{Wainwright2009}
M. J. Wainwright, \newblock ``Information-theoretic limits on sparsity recovery in the high-dimensional and noisy setting,''
\newblock {\em IEEE Trans. Inf. Theory,} vol. 55, no. 12, pp. 5728-5741, Dec. 2009.

\bibitem{Ji2008}
S. Ji, Y. Xue, and L. Carin, \newblock ``Bayesian compressive sensing,'' \newblock {\em IEEE Trans. Signal Process.,}  vol. 56, no. 6, pp. 2346-2356, Jun. 2008.


\bibitem{Schniter2009}
P. Schniter, L. C. Potter, and J. Ziniel, \newblock ``Fast Bayesian matching pursuit,''\newblock {\em in Proc. of IEEE Information Theory and Applications Workshop,}  pp. 326-333, Jan. 2008.



\bibitem{Zayyani2009}
H. Zayyani, M. Babaie-Zadeh, and C. Jutten, \newblock ``An iterative bayesian algorithm for sparse component analysis in presence of noise,''\newblock {\em IEEE Trans. on Signal Processing,} vol. 57, no. 11, pp. 4378-4390, Nov. 2009.


\bibitem{Herzet2010}
 C. Herzet and A. Dremeau, \newblock ``Bayesian pursuit algorithms,''\newblock {\em in Proc. IEEE European Signal Processing Conference (EUSIPCO),}  pp. 1474-1478, Aug. 2010.
 
\bibitem{Herzet2011}
A. Dremeau, C. Herzet, L. Daudet, \newblock ``Soft Bayesian pursuit algorithm for sparse representations,''\newblock {\em in Proc. IEEE Statistical Signal Processing Workshop (SSP),}  pp. 341-344, Jan. 2011.


\bibitem{Tian2009}
Z. Tian , G. Leus, and V. Lottici, \newblock ``Detection of sparse signals under finite-alphabet constraints,''\newblock {\em in Proc. IEEE Int. Conf. Acoust. Speech Signal Process (ICASSP),} pp.2349 -2352, Mar. 2009.

\bibitem{Vishwanath2013}
A. K. Das and S. Vishwanath, \newblock ``On finite alphabet compressive sensing,''\newblock {\em in Proc. IEEE Int. Conf. Acoust. Speech Signal Process (ICASSP),}   pp. 5890-5894, Mar. 2013. 

\bibitem{cvx}
M. Grant and S. Boyd, \newblock ``CVX: Matlab software for disciplined convex programming,''\newblock {\em version 2.1,}  http://cvxr.com/cvx, Mar. 2014.


\bibitem{Needell_Tropp_2008}
D. Needell, J. A. Tropp, and R. Vershynin,  \newblock ``Greedy signal recovery review,''\newblock {\em in Proc. IEEE 42nd Asilomar Conference on Signals, Systems, and Computers,} pp. 1048-1050, Oct. 2008.



\bibitem{Ward}
R. Ward, \newblock ``Compressed sensing with cross validation,''
\newblock {\em IEEE Trans. Inf. Theory,} vol. 55, no. 12, pp. 5773-5782, Dec. 2009.


\bibitem{Fletcher}
 A. K. Fletcher, S. Rangan, and V. K.Goyal, \newblock ``A sparsity detection framework for on-off random access channels,''
\newblock {\em IEEE ISIT 2009,} pp.169-173, 2009.
 

\bibitem{Wang_Yin}
Y. Wang and W. Yin, \newblock ``Sparse signal reconstruction via iterative support detection,''
\newblock {\em SIAM J. Imaging Sciences,} vol. 3, no. 3, pp. 462-491, 2010.

\end{thebibliography}
\end{document}